\documentclass[11pt]{article}

\usepackage[utf8]{inputenc}
\usepackage[T1]{fontenc}
\usepackage[margin=1in]{geometry}
\usepackage{amsmath,amssymb,amsthm}

\usepackage[numbers,sort&compress]{natbib}
\usepackage[colorlinks,citecolor=blue,urlcolor=blue,linkcolor=blue]{hyperref}
\usepackage{graphicx}

\usepackage{thm-restate}

\theoremstyle{plain}

\newtheorem{theorem}{Theorem}[section]
\newtheorem{lemma}[theorem]{Lemma}
\theoremstyle{definition}
\newtheorem{definition}[theorem]{Definition}

\theoremstyle{remark}

\newcommand{\adj}{A}
\newcommand{\obs}{X}
\newcommand{\probs}{\theta}
\newcommand{\probshat}{\widehat{\probs}}
\newcommand{\probstilde}{\widetilde{\probs}}

\newcommand{\indicator}[1]{\mathbbm{1}_{\left\{#1\right\}}}
\newcommand{\abs}[1]{\left|#1\right|}

\newcommand{\map}{z}
\newcommand{\maps}{\mathcal{Z}}

\newcommand{\indexk}[1]{^{(#1)}}

\newcommand{\gbern}[1]{\operatorname{Multinoulli}\left(#1\right)}
\newcommand{\N}{\mathbb{N}}
\newcommand{\n}{[n]}
\newcommand{\measures}[1]{\mathcal{P}\left(#1\right)}

\newcommand{\K}{\mathcal{K}}
\newcommand{\numK}{L}
\newcommand{\indexK}{l}

\newcommand{\setobs}{\left\{ a \in \{0,1\}^\numK: \|a\|_1 = 1\right\}}

\newcommand*\diff{\mathop{}\!\mathrm{d}}

\NewDocumentCommand{\boundedssm}{ O{d} O{k}}{\mathcal{B}_{#1}(SSM(#2))}
\newcommand{\ubshapes}[2]{\operatorname{UB}(#1,#2)}
\newcommand{\lbshapes}[2]{\operatorname{LB}(#1,#2)}

\newcommand{\E}[2][]{\mathbb{E}_{#1}\left[#2\right]}
\newcommand{\pr}[1]{\mathrm{pr}\left(#1\right)}

\newcommand{\cdf}[2][]{F_{#1}\left(#2\right)}
\newcommand{\Wd}[2][]{\mathcal{W}_{#1}\left(#2\right)}
\newcommand{\histmeasures}[2][\numK]{\mathcal{P}_{#1}\left(#2\right)}

\usepackage{tikz}
\usetikzlibrary{arrows, fit, matrix, positioning, shapes, backgrounds, positioning, shapes.geometric, arrows.meta, calc}
\definecolor{IBM1}{RGB}{100, 143, 255}
\definecolor{IBM2}{RGB}{220, 38, 127}
\definecolor{IBM3}{RGB}{255, 176, 0}

\usepackage[nameinlink]{cleveref}
\crefname{assumption}{assumption}{assumptions}

\usepackage{xfp}

\newcommand{\rom}[1]{\uppercase\expandafter{\romannumeral #1\relax}}
\newtheorem{assumption}{Assumption}
\newtheorem{proposition}{Proposition}
\newtheorem{remark}{Remark}
\usepackage{algorithmicx}
\usepackage[ruled]{algorithm2e}
\usepackage{bbm}
\crefname{algorithm}{Algorithm}{Algorithms}

\begin{document}

\title{Graphon estimation beyond binary edges: inference for decorated graphs with applications to multiplex and weighted networks}

\author{%
  Charles Dufour\thanks{Institute of Mathematics, \'Ecole Polytechnique F\'ed\'erale de Lausanne. Email: \texttt{charles.dufour@epfl.ch}. ORCID: 0009-0000-3612-7335.}
  \and
  Sofia C. Olhede\thanks{Institute of Mathematics, \'Ecole Polytechnique F\'ed\'erale de Lausanne. Email: \texttt{sofia.olhede@epfl.ch}. ORCID: 0000-0003-0061-227X.}%
}

\date{}

\maketitle

\begin{abstract}
We introduce the first doubly non-parametric estimation method for decorated graphons, a generalisation of graphons that encodes edge weights, edge types, and other edge-level attributes in large networks. Graphons describe the limiting behaviour of large unlabelled networks through a symmetric measurable function governing the probability of edge formation, but the standard framework is restricted to binary edge information. Decorated graphons lift this restriction, yet no inference procedure has previously been available for them. The proposed estimator extends classical graphon estimation techniques to this enriched setting. We derive rates of convergence and show that, for compactly supported decorations, these rates agree with known non-parametric rates for estimating real-valued functions. Monte Carlo experiments confirm that the theoretical rates are attained in finite samples, and applications to synthetic and empirical networks show improved fit relative to binary-edge baselines. The method extends graphon-based inference to multiplex networks and attributed graphs simultaneously.
\end{abstract}

\medskip
\noindent\textbf{Keywords:} Decorated graph, graphon, nonparametric estimation, edge attributed, edge labels, multiplex network, weighted network.

\bigskip

\section{Introduction}

Graphons have emerged as a fundamental tool in studying large, unlabeled simple networks, offering a robust framework grounded in the theory of exchangeability \citep{kallenberg_representation_1989,diaconis_graph_2007,borgs_lp_sparse_I_2019}. These mathematical objects serve as the limiting objects for sequences of dense graphs which are instrumental in domains such as sociology, biology, and computer science, where understanding the asymptotic properties of large networks is crucial \citep{lovasz_large_2012}. Traditional graphons encode binary information on edges, indicating the presence or absence of connections between nodes. However, many real-world networks exhibit richer structures, where edges carry additional information beyond mere connectivity, such as weights or types. For example, in a social network, edges between individuals might not only indicate friendship (a binary state) but also the frequency of interaction (a weight) or the type of relationship (colleague, friend, co-authorship) \citep{resnick_protecting_1997,magnani_combinatorial_2013}. In biological networks, edges might represent different biochemical interactions (e.g., protein-protein interactions, gene regulation), with decorations capturing the interaction strength or type. In transportation networks, edges can be decorated with travel time, cost, or type of transportation mode (e.g., bus, train, flight) \citep{cardillo_emergence_2013}. 
 
To address these complexities, Lov\'asz and Szegedy \citep{lovasz_limits_2010} introduced the theoretical framework of decorated graphons (see also probability-graphons \citep{abraham_probabilitygraphons_2025}), which extend traditional graphons by allowing edges to carry more detailed information (so-called decorations, features, weights, etc). Multiplex networks naturally fit into this framework \citep{kivela_multilayer_2014,ganguly_multiplexons_2025};  in multiplex networks, multiple types of connections can exist between the same set of nodes. By enumerating all possible combinations of these various connections and representing them as decorated edges, we can effectively treat multiplex networks as decorated graphs.

Several lines of work extend graph models beyond binary edges. For graphs with edge attributes, Donier-Meroz et al.~\citep{donier-meroz_graphon_2023} estimate graphon-type objects but target only the conditional mean of the edge variable, not its full distribution. Xu et al.~\citep{xu_optimal_2020} derive optimal rates for community detection in weighted stochastic block models, again without estimating the graphon. For multilayer and multiplex networks, Barbillon et al.~\citep{barbillon_stochastic_2017} propose a multiplex SBM with a fixed number of communities but do not consider the nonparametric limit. Other contributions in the multilayer setting include \citep{avrachenkov_community_2022,chandna_edge_2022,skeja_quantifying_2024,wang_multilayer_2024}. Pensky~\citep{pensky_signed_2025} and Fishkind et al.~\citep{fishkind_complete_2021} study specific subclasses of decorated graphs (signed and correlated networks, respectively). Lubberts et al.~\citep{lubberts_random_2025} approach edge-attributed graphs through random line graphs. Time-varying network models have also been developed \citep{chandna_nonparametric_2020,pensky_dynamic_2019,suveges_networks_2023}. None of these methods provide a unified nonparametric estimator for the full decorated graphon with convergence guarantees.

\subsection{Summary of contributions}
To the best of our knowledge, this work introduces the first estimation method specifically designed for decorated graphons with compactly supported decorations. While previous approaches to weighted graphs have focused primarily on estimating the mean of the adjacency matrix \citep{donier-meroz_graphon_2023}, or have been restricted to a fixed number of communities without considering the graphon limit \citep{barbillon_stochastic_2017}, our methodology characterizes the entire distribution of edge attributes. By extending the network histogram \citep{olhede_network_2014} to decorated edges, we preserve the convergence rates of traditional graphon estimation \citep{gao_rateoptimal_2015,klopp_oracle_2017,verdeyme_hybrid_2024}. Our main contributions are as follows:
\begin{itemize}
	\item Nonparametric Estimator for Decorated Graphons: We introduce the first nonparametric estimator for decorated graphons, recovering the full conditional distribution of edge attributes rather than only their mean \citep{donier-meroz_graphon_2023} or a parametric block structure with a fixed number of communities \citep{barbillon_stochastic_2017}. The estimator is based on least-squares minimization over piecewise-constant approximations of the graphon, embedding each decoration as a vertex of the probability simplex.
	\item Theoretical Guarantees for the Finite Setting: We establish finite-sample concentration bounds for our least-squares estimator. The error decomposes into a clustering rate (which matches the binary graph setting) and a nonparametric rate that scales linearly with the size of the decoration space. Because the one-hot observations are dependent within each edge, the concentration arguments of \citep{gao_rateoptimal_2015,verdeyme_hybrid_2024} based on Hoeffding's inequality are replaced throughout by McDiarmid's inequality.
	\item Extension to Compactly Decorated Graphons: We provide a rigorous methodology for networks with continuous edge variables (e.g., weights) by introducing a non-parametric approximation approach. This approach approximates continuous distributions using histograms while explicitly modeling the inherent sparsity of real-world networks via an atom at zero.
	\item Convergence Rates via Wasserstein Metric: For the continuous setting, we derive convergence rates utilizing the Wasserstein-1 distance, which naturally accommodates distributions with atomic components without requiring the existence of a probability density.
	\item Application to Multiplex Networks: We validate our theoretical findings on both synthetic graphs and a real-world multiplex network of human diseases, demonstrating that our estimator can capture cross-layer dependencies to successfully predict missing links.
\end{itemize}

The paper is organized as follows. Section~\ref{section:background} introduces decorated graphs and graphons. Section~\ref{section:inference_finitely} develops our estimator for finitely decorated graphons with convergence guarantees. Section~\ref{section:inference_compact} extends the framework to compactly supported decorations via discretization. Section~\ref{section:experiments} presents simulations and Section~\ref{section:ada} a real-data application. Proofs and practical considerations are deferred to the Appendices.

\section{Background: Decorated Graphs and Graphons}
\label{section:background}

A standard simple graph on $n$ vertices is represented by a symmetric adjacency matrix $\adj \in \{0,1\}^{n \times n}$. In a \emph{decorated graph} \citep{lovasz_limits_2010} (also called an edge-attributed graph), each edge $(i,j)$ instead carries an element of a set $\K$ (a weight, a color, or a type label, for instance), so that the adjacency matrix takes values in $\K^{n\times n}$. The set $\K$ contains a distinguished element $0_{\K}$ representing the absence of an edge. Requiring the decorated graph to be exchangeable means that $\{\adj_{ij}\}$ and $\{\adj_{\pi(i)\pi(j)}\}$ have the same joint distribution for every permutation $\pi$. We now describe the generating mechanism of such $\K$-valued exchangeable arrays.

\begin{definition}[$\K$-graphon \citep{lovasz_limits_2010}]
	\label{def:decorated_graphon}
	Let $\measures{\K}$ denote the set of probability Borel measures on a compact space $\K$, and let $\mathcal{W}(\K)$ denote the set of two-variable Borel measurable functions (with the weak topology)
	\begin{equation*}
		W:[0,1]^2 \rightarrow\measures{\K}, \text{ such that } W(x, y)=W(y, x) \text{ for every } (x, y) \in[0,1]^2.
	\end{equation*}
	Elements of $\mathcal{W}(\K)$ are referred to as \emph{decorated graphons} (or \emph{$\K$-graphons} when the decoration space needs emphasis).
\end{definition}

\begin{remark}
    \label{remark:decorated_network_complete_or_not}
    One can equivalently model edge existence and decoration separately, or simply let the decoration $0_{\K}$ encode the absence of an edge. We adopt the latter convention throughout: decorations and edges are used interchangeably.
\end{remark}

As for the simple graph case, we can get a functional representation of a $\K$-decorated graph; from Kunszenti-Kov\'acs, \citep{kunszenti-kovacs_multigraph_2022} and Kallenberg \citep{kallenberg_probabilistic_2005}, we get that if $\K$ is a compact Hausdorff space and $\adj \in \K^{n \times n}$ is jointly exchangeable, there exists a $\K$-graphon $W$ such that
\begin{equation}
	\label{eq:aldous-hoover-like}
	\adj_{ij} \mid \xi_i, \xi_j \overset{\text{iid}}{\sim} W(\xi_i,\xi_j), \text{ where } \xi_i \overset{\text{iid}}{\sim} U[0,1],
\end{equation}
where $\adj_{ij} \sim W(x,y)$  indicates that $\adj_{ij}$ follows the probability distribution characterized by $W(x,y)$. When observing a realization from a decorated graphon, each edge will then be decorated with exactly one element of $\K$.  The representation in eq.~\eqref{eq:aldous-hoover-like} resembles the Aldous-Hoover theorem in the case of simple graphs \citep{aldous_representations_1981,hoover_relations_1979}. Indeed, if $\K = \{0,1\}$, $\adj$ is the simple adjacency matrix, and $W(\xi_i,\xi_j)$ is a Bernoulli distribution.

\subsection{Finitely Decorated Graphs}
\label{subsection:finitely_decorated_graphs}

Let $\K$ be a finite set, such that $\K = \{x_1, \ldots, x_\numK\}$ with cardinality $|\K| = \numK < \infty$. Although this may appear restrictive, finite decorations naturally encompass discretized weighted graphs, colored graphs, and multiplex networks \citep{bianconi_multilayer_2018}: in the latter case, each unique combination of edge types across layers corresponds to a distinct decoration.

\begin{figure}[h!]
    \centering
    \includegraphics[width=\linewidth]{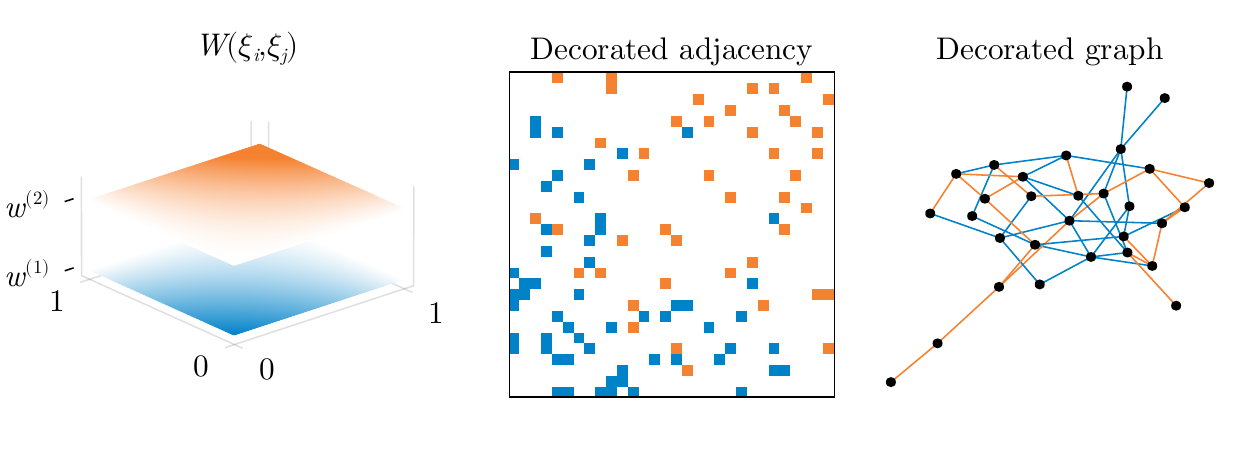}
    \caption{Example of a finitely decorated graphon (left) and a realized decorated graph (centre and right) from this graphon according to eq.~\eqref{eq:aldous-hoover-like}. Here $\K = \{\text{white}, \text{blue}, \text{orange}\}$, and each edge is colored with exactly one color. The color $\text{white}$ represents the absence of edge ($0_{\K}$), and thus $w^{(0)}$ is not shown.}
    \label{fig:decorated_example}
\end{figure}

To establish a probability distribution over $\K$, it is sufficient to define $\numK$ probabilities. Accordingly, a $\K$-graphon $W$ is characterized by $\numK$ symmetric measurable functions $w\indexk{\indexK} : [0,1]^2 \rightarrow [0,1]$, where the sum over all functions at any pair $(x, y)$ equals 1, i.e., $\sum_{\indexK=1}^\numK w\indexk{\indexK}(x, y) = 1$; here $w\indexk{\indexK}$ represents the probability of realization of the $\indexK$th decoration $x_{\indexK}$ (see Figure~\ref{fig:decorated_example}). Given fixed $x, y \in [0,1]$, $W(x, y)$ corresponds to an element of the set $\Delta_{\numK} := \{p \in [0,1]^\numK : \|p\|_1 = 1\}$ \citep{lovasz_limits_2010}. When conditioned on the latent variables $\{\xi_i\}$, we define:
\begin{equation*}
	\probs_{ij} = W(\xi_i, \xi_j) \text{ for } i, j \in \n,
\end{equation*}
where $\probs_{ij} \in \Delta_{\numK} \subset [0,1]^\numK$. 

\subsection{Estimation Objectives: Matrix vs. Function}
\label{subsection:estimation_goals}

There are two related but distinct inference goals. The first is \emph{value estimation}: recovering the matrix $\probs$ where $\probs_{ij} = W(\xi_i, \xi_j)$, a denoising problem measured by the Mean Squared Error (MSE):
\begin{equation}
  \label{equation:MSE_def}
  \frac{1}{n^2}\sum_{i,j \in \n}\|\probshat_{ij}-\probs_{ij}\|_2^2 = \frac{1}{n^2}\|\probshat-\probs\|_F^2.
\end{equation}
For dense binary graphs, minimizing this MSE is asymptotically equivalent to maximum likelihood \citep{gaucher_optimality_2021}. The second is \emph{function estimation}: recovering $W$ over $[0,1]^2$ up to the inherent unidentifiability from relabeling nodes. Since any measure-preserving bijection $\phi$ yields an equivalent graphon $W^\phi(x,y) = W(\phi(x), \phi(y))$, performance is measured by the Mean Integrated Square Error (MISE):
\begin{equation}
    \label{equation:MISE_def}
    \text{MISE}(\hat{W}, W) = \
  \inf_{\phi \in \mathcal{M}}\iint_{(0,1)^2}\left\|W^{\phi}\left(x,y\right)-\hat{W}\left(x,y\right)\right\|_2^2\diff{x}\diff{y},
\end{equation}
where $\mathcal{M}$ is the set of measure-preserving bijections from $[0,1]$ to $[0,1]$ \citep{wolfe_nonparametric_2013,klopp_oracle_2017}. Our method addresses both goals simultaneously: we construct a matrix estimator $\probshat$ via a piecewise-constant approximation of the graphon, which naturally induces a functional estimator $\hat{W}$.

\section{Inference of Finitely Decorated Graphons}
\label{section:inference_finitely}

In this section, we introduce the methodology for estimating finitely decorated graphons. Our goal is to recover the underlying probability function $W$ given a single observation of a decorated graph adjacency matrix $\adj \in \K^{n \times n}$. We formalize the assumption on the decoration space $\K$ as follows:

\begin{assumption}
	\label{assumption:finite}
	The decoration space $\K$ is a finite set, i.e.,  $|\K|=\numK < \infty$, $\K = \{x_1,\ldots,x_\numK\}$.
\end{assumption}

\subsection{Approximation via piece-wise constant functions}
To make estimation feasible, we approximate the smooth graphon $W$ using piece-wise constant functions. The standard approach is a stochastic block model (SBM) that partitions $[0,1]^2$ into a regular grid of $k\times k$ square blocks \citep{wolfe_nonparametric_2013,gao_rateoptimal_2015,janson_can_2021}, assigning an independent parameter to each block. In the decorated setting, the block parameter is a probability vector in the simplex $\Delta_{\numK}$ rather than a scalar, but the geometry of the approximation is unchanged: a $\K$-decorated $k$-block SBM is a graphon $W:[0,1]^2 \to \measures{\K}$ of the form $W(x,y) = \theta_{u_k(x), u_k(y)}$ with $u_k(x) = \lceil kx \rceil$ and symmetric parameters $\{\theta_{ab}\}_{a,b\in[k]} \subset \Delta_{\numK}$.

A more flexible approximation, the \emph{Stochastic Shape Model} (SSM) of Verdeyme and Olhede \citep{verdeyme_hybrid_2024}, allows multiple disjoint blocks of a $k\times k$ grid to share a common parameter, grouping them into a single ``shape'' (see Figure~\ref{fig:ssm}). This pooling reduces the effective number of parameters when the graphon has symmetries or large homogeneous regions and improves the bias-variance tradeoff. Our estimator and theory are stated in this more general form since the SBM is recovered as a special case, but no aspect of the contribution depends on the additional flexibility of shapes.

\begin{definition}[$\K$-decorated $(s,k)$-Stochastic Shape Model]
	\label{definition:ssm}
	Let $u_k: [0,1] \to [k]$ be the block assignment $u_k(x) = \lceil kx \rceil$, and let  $u_s: [k]^2 \to [s]$ be a symmetric mapping, i.e., $u_s(a,b) = u_s(b,a)$ for all $a,b \in [k]$. A graphon $W: [0,1]^2 \to \measures{\K}$ is a $\K$-decorated $(s,k)$-stochastic shape model if there exist measures $\{\theta_c\}_{c=1}^s \subset \measures{\K}$  such that
    \[
        W(x, y) = \theta_{u_s(u_k(x),\, u_k(y))}
        \quad \text{for all } (x,y) \in [0,1]^2.
    \]
    When $s = \binom{k+1}{2}$, the SSM reduces to a standard $k$-block SBM; see Appendix~\ref{sec:finitely_decorated} for the mapping from nodes to shapes.
\end{definition}

\begin{figure}[h!]
	\centering
	\includegraphics[width=14cm]{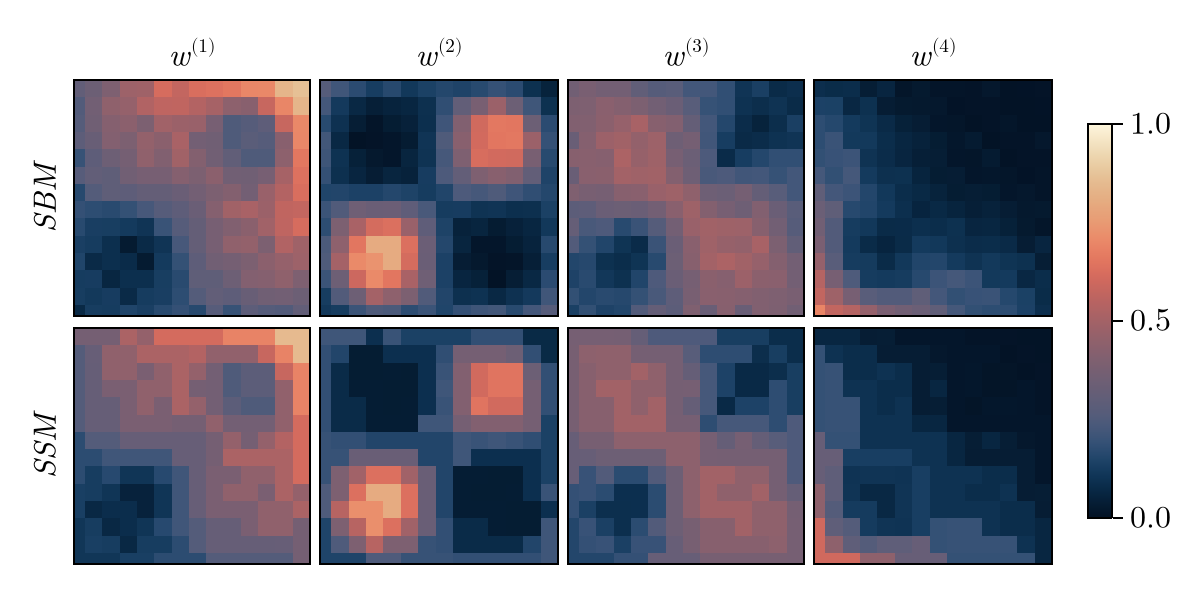}
	\caption{SBM approximation of $W_3$ (see Table~\ref{tab:sim}) in the first row and estimated decorated graphon using a decorated $(s=27,k=14)$-SSM in the second. Our theory applies to both; the SSM refinement offers a better bias-variance tradeoff when the graphon has symmetries, as visible in $w\indexk{4}$. The estimator was computed based on an observation with $300$ nodes.}
	\label{fig:ssm}
\end{figure}

\subsection{Construction of the Estimator}
\label{subsection:estimator_construction}

Our estimation strategy relies on a least-squares approach, which for dense binary networks is equivalent to maximum likelihood estimation \citep{gaucher_optimality_2021}. However, a direct comparison between observed decorations $\adj_{ij} \in \K$ and probabilities $W(\xi_i, \xi_j) \in [0,1]^\numK$ is not feasible, as the former lies in a discrete set while the latter lies in the probability simplex $\Delta_\numK$.

To bridge this gap, we employ a \emph{one-hot encoding} transformation. We identify each element of $\K$ with the corresponding vertex of $\Delta_\numK$ via the embedding $\iota:\K \hookrightarrow \Delta_\numK$, $\iota(x_l) = e_l$, where $e_l\in \{0,1\}^{\numK}$ denotes the $l$-th standard basis vector of $\mathbb{R}^\numK$. Writing $\obs_{ij} = \iota(\adj_{ij})$, the transformed observations lie at the vertices of the same simplex on which $W$ takes values, and $\obs_{ij} \mid \xi_i, \xi_j \sim \text{Multinomial}(1, W(\xi_i, \xi_j))$. We write $\obs = (\obs_{ij})_{i,j \in [n]}$ for the resulting array.

We perform the estimation by minimizing the least-squares error over the set of matrices generated by Stochastic Shape Models. Let $\Theta_{s,k}$ denote the set of all possible $n \times n$ probability matrices that can be generated by a $(s,k)$-SSM:
$$
\Theta_{s,k} = \left\{ \probs \in [0,1]^{n \times n \times \numK} : \exists\; \text{a } (s,k)\text{-SSM } W \text{ s.t. } \probs_{ij} = W(\xi_i,\xi_j) \text{ for some } \xi \in [0,1]^n \right\}.
$$

We define the matrix estimator $\probshat$ as the minimizer of the squared Frobenius error over this set:
\begin{equation}
	\label{eq:least-squares-formulation}
	\probshat \in \underset{{\probs \in \Theta_{s,k}}}{\operatorname{argmin}} \frac{1}{n^2}\|\obs-\probs\|_F^2.
\end{equation}

Finding this minimizer involves simultaneously identifying the optimal node-to-block assignments, the optimal block-to-shape mapping, and the optimal shape parameters $\{\theta_c\}$. For a fixed shape structure, the optimal parameters are simply the empirical means of the one-hot vectors within each shape. Thus, the problem reduces to a combinatorial optimization over partitions, which can be solved using clustering algorithms adapted from the SBM literature. We are effectively fitting a $(s,k)$-SSM to the observed data, and $\probshat$ is a ``projection'' of the parameters of this fitted model onto the space of $n \times n$ matrices, in such a way that the resulting matrix is aligned with the observed adjacency matrix $\adj$.

In practice, the combinatorial optimization in eq.~\eqref{eq:least-squares-formulation} is NP-hard, and we approximate the global optimum using a greedy label-switching algorithm \citep{bickel_nonparametric_2009,olhede_network_2014} initialized with a multilevel $k$-way partitioning via METIS \citep{karypis_metis_1997,karypis_multilevel_1998}; see Appendix~\ref{sec:optimization} for details. When the smoothness $\alpha$ is unknown, we select $k$ using the automatic bandwidth selection of Olhede and Wolfe~\citep{olhede_network_2014} applied to each decoration probability $w^{(l)}$, and choose $s$ via the Bayesian Information Criterion as in Verdeyme and Olhede~\citep{verdeyme_hybrid_2024}.

Once the matrix parameter $\probshat$ is obtained, we recover the graphon estimator $\hat{W}: [0,1]^2 \to [0,1]^\numK$ as the step function induced by the estimated block probabilities:

\begin{equation}
	\label{eq:function-estimator}
	\hat{W}(x,y) = \probshat_{\lceil nx \rceil,\lceil ny \rceil},
\end{equation}
which is equivalent to the $(s,k)$-SSM associated to $\probshat$ in eq.~\eqref{eq:least-squares-formulation}.

\subsection{Properties of the Estimator}
\label{subsection:finitely-properties}

We now analyze the theoretical properties of the least squares estimator defined above, under settings similar to the current literature \citep{olhede_network_2014, gao_rateoptimal_2015, klopp_oracle_2017, verdeyme_hybrid_2024}. We start by establishing a bound for the case where the true graphon is exactly a Stochastic Shape Model.

\begin{theorem}
	\label{theorem:shape-rate}
	Let $W$ be a $\K$-decorated $(s,k)$-stochastic shape model, with $\K$ satisfying Assumption~\ref{assumption:finite}. Then for any $C'>0$ there exists $C>0$ such that
	\begin{equation*}
		\frac{1}{n^2}\sum_{i,j \in \n}\left\|\probshat_{ij} - \probs_{ij}\right\|_2^2 \leq C\left(\frac{s\numK}{n^2} + \frac{\log(\max(k,s))}{n}\right),
	\end{equation*}
	with probability at least $1-\exp\left(-C'n\log (\max (k, s))\right)$. Furthermore,
	\[\sup _{\theta \in \Theta_{s,k}} \E{\frac{1}{n^2} \sum_{i j}\|\probshat_{i j}-\probs_{i j}\|_2^2} \leq C_1\left(\frac{s\numK}{n^2}+\frac{\log (\max (k, s))}{n}\right)\]
	with some universal constant $C_1>0$ and $n>\max \left(0, k^2-s\right)$.
\end{theorem}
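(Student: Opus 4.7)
The plan is to follow the least-squares analysis used by \citet{gao_rate-optimal_2015} and \citet{verdeyme_hybrid_2024}, adapted to the vector-valued setting in which $\obs_{ij} \in \{0,1\}^\numK$ is a one-hot encoding with $\E{\obs_{ij}}=\probs_{ij}$. First, I would exploit the optimality of $\probshat$ in \cref{eq:least-squares-formulation}: since $\probs \in \Theta_{s,k}$ by assumption, $\|\obs - \probshat\|_F^2 \leq \|\obs - \probs\|_F^2$, which after expansion yields the basic inequality
\begin{equation*}
\|\probshat - \probs\|_F^2 \;\leq\; 2\,\langle \obs - \probs,\;\probshat - \probs\rangle_F.
\end{equation*}
Because $\obs - \probs$ is centred, the right-hand side is an empirical process indexed by the discrete class $\Theta_{s,k}$, and the task reduces to controlling its supremum uniformly in $\probs$.

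Second, for a \emph{fixed} competitor $\probstilde \in \Theta_{s,k}$, the inner product $\langle \obs - \probs,\probstilde - \probs\rangle_F$ is a sum of independent centred bounded coordinates: each entry of $\obs_{ij} - \probs_{ij}$ lies in $[-1,1]$ with variance at most $\probs_{ij}\indexk{\indexK}(1-\probs_{ij}\indexk{\indexK})$. Bernstein's inequality therefore produces a sub-Gaussian tail whose variance proxy scales with $\|\probstilde - \probs\|_F^2$. A standard peeling/localisation argument converts this into a bound of the form $\langle \obs-\probs,\probshat-\probs\rangle_F \lesssim \sqrt{r\,\|\probshat-\probs\|_F^2} + r$, where $r$ is the effective complexity of $\Theta_{s,k}$.

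Third, I would compute $r$ by a union bound over $\Theta_{s,k}$. Elements of $\Theta_{s,k}$ are parametrised by (i) a node-to-block map in $[k]^n$ giving at most $k^n$ choices, (ii) a symmetric shape labelling of the $k\times k$ block grid giving at most $s^{k(k+1)/2}$ choices, and (iii) for each of the $s$ shapes a probability vector in the $(\numK{-}1)$-simplex, contributing $s(\numK-1)$ continuous degrees of freedom which are handled by a covering argument on the unit simplex. The combinatorial log-cardinality is at most $n\log k + \tfrac{k(k+1)}{2}\log s$, and the condition $n > \max(0,k^2-s)$ ensures both terms are absorbed into $C' n\log\max(k,s)$; the simplex covering adds the $s\numK$ contribution. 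Arithmetic-geometric rearrangement of the localised bound then yields
\begin{equation*}
\|\probshat-\probs\|_F^2 \;\leq\; C\bigl(s\numK + n\log\max(k,s)\bigr)
\end{equation*}
on an event of probability at least $1-\exp(-C'n\log s)$. Dividing by $n^2$ gives the high-probability statement, and integrating the tail yields the expectation bound with a universal constant $C_1$.

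The main obstacle, I expect, is the bookkeeping of the dimension $\numK$ in the covering step. Because $\obs_{ij}$ is a one-hot vector the noise lives in a $(\numK-1)$-dimensional simplex rather than in all of $[0,1]^\numK$, and a naive $\varepsilon$-net on $[0,1]^\numK$ would produce a spurious $\numK^2$ factor. The correct argument must use that each shape has only $\numK - 1$ free parameters and that the per-edge variance is controlled by $\sum_\indexK \probs_{ij}\indexk{\indexK}(1-\probs_{ij}\indexk{\indexK}) \leq 1$, yielding the linear dependence on $\numK$ in the $s\numK/n^2$ term. Everything else is a direct vector-valued extension of the proof in \citet{verdeyme_hybrid_2024}.
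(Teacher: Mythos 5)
Your skeleton matches the paper's: the basic inequality from optimality of $\probshat$, a union bound over the assignment class with a covering argument contributing $(\numK-1)s$ continuous degrees of freedom, and integration of the tail for the expectation bound. But there is one genuine error at the concentration step, and it is precisely the point the paper flags as the main difficulty. You assert that $\langle \obs-\probs,\probstilde-\probs\rangle_F$ is a sum of \emph{independent} centred bounded coordinates with per-coordinate variance $\probs_{ij}\indexk{\indexK}(1-\probs_{ij}\indexk{\indexK})$. This is false: for a fixed edge $(i,j)$ the $\numK$ coordinates of the one-hot vector $\obs_{ij}$ are perfectly dependent (exactly one equals $1$), even conditionally on $\{\xi_i\}$, so coordinate-wise Bernstein/Hoeffding does not apply. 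The paper's fix is to group by edge and apply McDiarmid's bounded-differences inequality to the independent family $\{\obs_{ij}\}_{i>j}$: changing the decoration of a single edge perturbs $\langle b,\obs-\probs\rangle$ by at most $4\max_\indexK|b_{ij\indexK}|$, and $\sum_{ij}\max_\indexK b_{ij\indexK}^2\leq\|b\|_F^2$, which recovers the sub-Gaussian tail $\exp(-t^2/8)$ with no spurious $\numK$ factor. Your closing remark about the variance sum $\sum_\indexK\probs\indexk{\indexK}(1-\probs\indexk{\indexK})\leq 1$ shows you sense the issue, but the proof as sketched would not go through without replacing the independence claim by this per-edge argument (or an equivalent one).

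A secondary, more organisational difference: where you invoke a generic ``peeling/localisation'' step, the paper uses the explicit three-term decomposition through the oracle $\probstilde$ (the block averages of the true $\probs$ under the estimated assignment $\hat\map$), bounding separately $\|\probshat-\probstilde\|_F$ and the two normalised inner products. The norm term is not covered by your sketch at all; the paper controls it by showing that the within-shape statistics $V_c(\map)=n_c^{-1}\sum_\indexK\bigl(\sum_{(i,j)\in\map^{-1}(c)}(\obs_{ij}\indexk{\indexK}-\probs_{ij}\indexk{\indexK})\bigr)^2$ have expectation at most $\numK$ (again using independence only \emph{across} edges) and are sub-exponential because $\|\obs_{ij}-\probs_{ij}\|_1\leq 2$. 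If you want to keep your localisation route you would need to supply an analogous control of the within-shape averaging error; otherwise adopt the paper's decomposition.
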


The error bound in Theorem~\ref{theorem:shape-rate} reveals an interesting decomposition of the problem. It comprises two distinct terms: a nonparametric $s\numK/n^2$ and a clustering (\(\log(\max(k,s))/n\)).
Notably, the clustering rate is identical to that found in the simple binary graph case, both for standard block models \citep{gao_rateoptimal_2015} and extended shape models \citep{verdeyme_hybrid_2024}. This is intuitive: the difficulty of the clustering (finding the optimal partition of $n$ nodes into $k$ groups) remains fundamentally the same regardless of what information is on the edges. Conversely, the nonparametric rate scales linearly with $\numK$. This reflects the change in our modeling assumptions: for every shape, we must now estimate a probability vector of size $\numK$ rather than a single scalar parameter.

Next, we generalize this result to the setting where $W$ is a smooth function rather than a step function. We assume $W$ satisfies a Hölder continuity condition.

\begin{assumption}
	\label{assumption:holder}
	The $\K$-decorated graphon $W$ is Hölder continuous with exponent $\alpha \in (0,1]$. That is, $W \in \mathcal{H}(\alpha, M)$ where:
	$$ \sup _{(x, y) \neq\left(x^{\prime}, y^{\prime}\right) \in(0,1)^2} \frac{\|W(x, y)-W\left(x^{\prime}, y^{\prime}\right)\|_1}{\|(x, y)-\left(x^{\prime}, y^{\prime}\right)\|_1^\alpha} \leq M < \infty.$$
\end{assumption}

In the context of approximation, the complexity of a $(s,k)$-SSM is controlled by the diameter of its shapes (the maximum distance between any pair of points within a shape). Together with the number of shapes $s$ and blocks $k$, this diameter dictates the bias-variance tradeoff: finer shapes reduce bias (better approximation of smooth functions) but increase variance (fewer observations per parameter).

\begin{restatable}{theorem}{theoremHolderRateStatic}
	\label{theorem:holder-rate}
	For $W \in \mathcal{H}(\alpha, M)$ (Assumption~\ref{assumption:holder}) and $n> \numK$, for any $C^{\prime}>0$ there exists a constant $C>0$ depending only on $C^{\prime}, M, \alpha$, such that for an appropriate choice of $k$ and $s$ (specified in Appendix~\ref{appendix_a:oracle_ssm}), with all shapes having diameter at most $n^{-1/(\alpha \wedge 1 + 1)}$,
	$$
		\frac{1}{n^2} \sum_{i j}\|\probshat_{i j}-\probs_{i j}\|_2^2 \leq C\left(\numK n^{-2 \alpha /(\alpha+1)}+\frac{\log (n)}{n} \right),
	$$
    and
    $$
		\operatorname{MISE}\left(\widehat{W}_{\probshat}, W\right) \leq C\left(\numK n^{-2 \alpha /(\alpha+1)}+\frac{\log (n)}{n}  + n^{-\alpha \wedge 1}\right),
	$$
    with probability at least $1-\exp(-C'n)$. Furthermore,
	$$
		\sup _{W \in \mathcal{H}(\alpha, M)} \E{\frac{1}{n^2} \sum_{i, j \in}\|\probshat_{i j}-\probs_{i j}\|_2^2} \leq C_1\left(\numK n^{-2 \alpha /(\alpha+1)}+\frac{\log (n)}{n}\right),
	$$
    and
    $$
		\sup _{W \in \mathcal{H}(\alpha, M)} \E{\operatorname{MISE}\left(\widehat{W}_{\probshat}, W\right)} \leq C_1\left(\numK n^{-2 \alpha /(\alpha+1)}+\frac{\log (n)}{n} + n ^{-\alpha\wedge 1}\right).
	$$
\end{restatable}

The proof (given in Appendix~\ref{subsection:holder-rate}) decomposes the error into three terms: (i) a bias term from approximating the smooth graphon $W$ with a piecewise-constant function, controlled by $\alpha$ and the shape diameters; (ii) a variance term from estimating the parameters within each shape, scaling as $\numK s / n^2$; and (iii) a clustering term scaling as $\log(n)/n$. The optimal choice of the number of blocks $k$, the number of shapes $s$, and the shape diameters balances these three contributions. In the appendix, we introduce a parameter $\Delta\in[0,1]$ that interpolates between the standard block-model regime ($\Delta=0$, all blocks are separate shapes) and regimes with smaller blocks grouped into more complex shapes ($\Delta>0$); see Remark~\ref{remark:appendix_static_shape_number} and Appendix~\ref{appendix_a:oracle_ssm} for details.

When $\K = \{0,1\}$, and we pick $s$ such that we only consider SBMs,  these rates align perfectly with those established by Gao et al. \citep{gao_rateoptimal_2015} and by Klopp et al. \citep{klopp_oracle_2017}. Otherwise, when $s$ is allowed to vary, we retrieve the results of Verdeyme and Olhede \citep{verdeyme_hybrid_2024}, confirming that our method is a consistent generalization of the binary framework.

\section{Inference of Compactly Decorated Graphs}
\label{section:inference_compact}

We now extend our framework to the case where $\K$ is a compact Borel space. Without loss of generality, we consider $\K=[0,1]$ \citep{kallenberg_foundations_2021}. This setting is essential for analyzing weighted graphs where edges are continuous variables (e.g., travel times or interaction strengths).

\subsection{Estimation Strategy: Discretization}
Our strategy relies on a discretization approach. Since estimating a continuous distribution for every point on the graphon is ill-posed without further constraints, we approximate the continuous distributions using histograms. This transforms the compactly decorated problem into a finitely decorated one (by treating bins as categorical labels), allowing us to leverage the estimator $\probshat$ developed in Section~\ref{section:inference_finitely} (see Figure~\ref{fig:equivalence_compact_finite}).

\begin{figure}[h]
    \centering
    \begin{tikzpicture}
        \node[rectangle] (cont_dist) {\includegraphics[width=4cm]{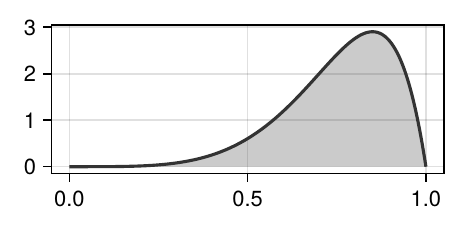}};
        \node[rectangle, below=0cm of cont_dist] (finite) {\includegraphics[width=4cm]{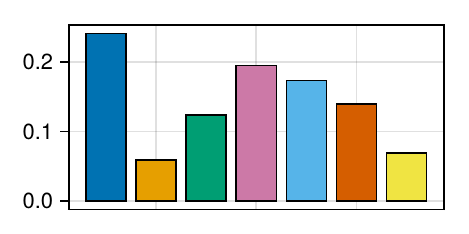}};
        \node[rectangle, right=1.2cm of cont_dist] (hist) {\includegraphics[width=4cm]{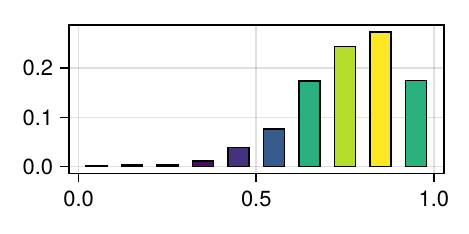}};
        \draw[-{Stealth}] (cont_dist.east) -- (hist) node[midway, above] {$L$ bins};
        \path ($(cont_dist)!0.5!(finite)$) ++(10,0) node[draw] (cat) {$\begin{aligned}
        W(\xi_i,\xi_j) &\mapsto \theta_{ij} \in [0,1]^L \\ X_{ij} &\mapsto \delta_{ij} \in \{0,1\}^L
        \end{aligned}$};
        \draw[-{Stealth}] (hist.east) -| (cat.north);
        \draw[-{Stealth}] (finite.east) -| (cat.south);
    \end{tikzpicture}
    \caption{Schematic representation of the link between a compactly decorated graphon and its finitely decorated graphon approximation. The continuous distribution function (top left) is approximated by a histogram with $L$ bins (top right). It can then be treated similarly to a discrete distribution (bottom left).}
    \label{fig:equivalence_compact_finite}
\end{figure}

Unlike standard density estimation, we must account for the sparsity of real-world networks. In weighted graphs, ``zero'' often signifies the absence of an edge, not just a weight of value zero. We formalize this via the following assumption, which is standard in weighted graph analysis to account for the potential higher prevalence of zero weights \citep{xu_optimal_2020}.

\begin{assumption}
    \label{assumption:DistributionFunction}
    The function $W:[0,1]^2 \rightarrow \measures{[0,1]}$ is such that for all pairs $(x,y) \in [0,1]^2$, the distribution $W(x,y)$ has an atom at $0$. That is, its cumulative distribution function (CDF) can be written as:
    \begin{equation}
        w_{x,y}(t) = (1-p_{x,y})\indicator{t \geq 0} + p_{x,y}\,F_{x,y}(t), 
        \quad \text{for } p_{x,y} \in [0,1],
    \end{equation}
    where $F_{x,y}$ is a distribution function supported on $[0,1]$ with  $F_{x,y}(0)=0$. We drop the dependence on $(x,y)$ when clear from context, writing $w$, $p$, and $F$ instead.
\end{assumption}

\subsection{Constructing the Estimator}

We approximate the CDF $w$ using a piecewise-linear function $\tilde{w}$ defined on $\numK$ equal-sized bins $(x_\indexK,x_{\indexK+1}]$. Specifically, if we let $X \sim w$, we define the parameter vector $\theta \in [0,1]^{\numK+1}$ as:
\begin{equation}
	\theta\indexk{\indexK} =  \begin{cases}
		w(0) & \text{ if } \indexK = 0    \\
		\pr{X \in (x_{l-1},x_{l}]} & \text{ for } l=1,\ldots,L.
    \end{cases}
	\label{eq:theta_hist_from_cont}
\end{equation}
This choice of basis functions allows for both theoretical analysis and computational efficiency. %

The estimation procedure relies on temporarily mapping the continuous problem into the finite framework established in Section~\ref{section:inference_finitely}. We first discretize the observed adjacency matrix $A$ by replacing every entry $A_{ij}$ with an integer label $l \in \{0, \dots, L\}$, corresponding to the bin $(x_{l-1}, x_l]$ in which the weight falls (or 0 for empty edges). This yields an auxiliary finitely decorated graph $\tilde{A}$.

We then compute the estimator $\probshat$ for this auxiliary graph using the minimization defined in eq.~\eqref{eq:least-squares-formulation}. This provides us with an estimated probability matrix $\probshat \in [0,1]^{n \times n \times (L+1)}$, where $\probshat_{ij}$ serves as the weights for our histogram approximation.

Finally, we reconstruct the continuous estimator by using these weights to build the cumulative distribution function. The estimated distribution $\hat{w}_{ij}(x)$ is defined as:
\begin{equation}
    \label{eq:estimator_w_reconstruction}
    \hat{w}_{ij}(x) =  \indicator{x \geq 0} \hat{\theta}_{ij}\indexk{0} + \sum_{\indexK=0}^{\numK-1}\indicator{x \in (x_\indexK,x_{\indexK+1}]}\left(\sum_{k=1}^{\indexK}\hat{\theta}_{ij}\indexk{k}+ \numK_n\left(x-x_\indexK\right)\hat{\theta}_{ij}\indexk{\indexK+1}\right).
\end{equation}
The final graphon estimator $\hat{W}(u,v)$ is the function returning $\hat{w}_{\lceil nu \rceil, \lceil nv \rceil}$.

\subsection{Properties of the Estimator}

To analyze the convergence in the compact setting, we must adapt our smoothness assumptions. Since we are estimating distribution functions, we extend the Hölder condition using the Wasserstein distance $\mathcal{W}_1$ \citep{panaretos_statistical_2019}. The Wasserstein-1 distance is particularly suitable for our framework because here it is equivalent to the $L_1$ norm between cumulative distribution functions \citep{doi:10.1137/1118101} and does not require the existence of probability densities, allowing us to compare distributions with atomic components.

\begin{assumption}
    \label{assumption:Holder_wasserstein}
   $W:[0,1]^2 \rightarrow \measures{[0,1]}$ is Hölder continuous with exponent $\alpha \in (0,1]$. Specifically, $W \in \mathcal{H}(\alpha, M)$ where:
	$$ \sup _{(x, y) \neq\left(x^{\prime}, y^{\prime}\right) \in(0,1)^2} \frac{\mathcal{W}_1\left(W(x, y),W\left(x^{\prime}, y^{\prime}\right)\right)}{\|(x, y)-\left(x^{\prime}, y^{\prime}\right)\|_1^\alpha} \leq M < \infty.$$
\end{assumption}
This assumption is advantageous because it implies the smoothness of the resulting binned graphon approximations and, crucially, does not require the existence of a density function for $W(x,y)$.

\begin{theorem}
	\label{theorem:rate-compact}
	Let $W:[0,1]^2 \rightarrow \measures{[0,1]}$ satisfying Assumption~\ref{assumption:DistributionFunction} and Assumption~\ref{assumption:Holder_wasserstein}. If we select model parameters $k = \left\lceil n^{3/(4\alpha+3)} \right\rceil$, $s=\binom{k+1}{2}$, and number of bins $L = \left\lceil n^{2\alpha/(4\alpha+3)} \right\rceil$, then for all $C'>0$, there exists $C>0$ such that:
	$$
    \frac{1}{n^2}\sum_{i,j}\left(\int_0^1|w_{ij}(x)-\hat{w}_{ij}(x)|\mathrm{d}x\right) \leq Cn^{-2\alpha/(4\alpha+3)}
	$$
    holds with probability at least $1-\exp\left(-C'n\right)$.

    Additionally, if $W(\xi_i,\xi_j)$ admits a density $\mu_{ij}$ that is Lipschitz continuous (bounded by $B$), we obtain the following rate for density estimation:
    $$
        \frac{1}{n^2} \sum_{i,j}\left[\int_0^1\left(\mu_{ij}(x)-\hat{\mu}_{i j}(x)\right)^2 \mathrm{d}x\right] = O_p\left(n^{-4\alpha/(4\alpha+3)}\right).
    $$
\end{theorem}

The first bound is the primary result: it holds for all distributions satisfying Assumptions~\ref{assumption:DistributionFunction}--\ref{assumption:Holder_wasserstein}, including those with atoms and without densities. The Wasserstein-$1$ metric is the natural choice in this generality, since it metrizes weak convergence on $[0,1]$. The second bound requires the stronger assumption that densities exist and are Lipschitz; under this condition, we recover the more familiar $L_2$ rate, which is the square of the Wasserstein-$1$ rate.

The proof of Theorem~\ref{theorem:rate-compact} decomposes the error into four sources: the graphon approximation error (scaling as $k^{-2\alpha}$), the variance of the block-level estimator ($Lk^2/n^2$), the clustering error ($\log(k)/n$), and the discretization error from approximating continuous distributions with $L$ bins ($L^{-1}$). The optimal $(k,L)$ balance the approximation, variance, and discretization terms; the clustering term is automatically dominated under this scaling. The resulting choice of $L$ is in line with the classical Freedman--Diaconis prescription \citep{freedman_histogram_1981} that optimal bin counts grow polynomially in the number of observations per shape.

Compared to the finite case (Theorem~\ref{theorem:holder-rate}), the natural exponent in the rate moves from $2\alpha/(\alpha+1)$ to $4\alpha/(4\alpha+3)$. The two rates are not directly comparable—the finite case bounds a squared $\ell_2$ distance between probability vectors, the continuous case a squared $L_2$ distance between densities—but the continuous setting unavoidably pays for estimating an entire density rather than a finite-dimensional vector. Whether the resulting gap is fundamental or an artifact of our discretization remains an open question.

\section{Numerical experiments}
\label{section:experiments}

\subsection{Empirical Rate of Convergence}

In this section, we showcase the performance of our method by evaluating the Mean Squared Error (MSE) defined in eq.~\eqref{eq:least-squares-formulation} between true and estimated decorated graphons. We consider the three cases described in Table~\ref{tab:sim}: $W_1$ and $W_3$ are Hölder smooth with $\alpha=1$, while $W_2$ is  Hölder continuous with $\alpha = 0.5$. To solve the combinatorial optimization problem necessary to compute our estimators, we use a local greedy search algorithm initialized via spectral clustering, similarly to the approach of Olhede and Wolfe \citep{olhede_network_2014}.

\begin{table}[h]
    \resizebox{\textwidth}{!}{%
	{\begin{tabular}{l|llllll}
			                   &  & \multicolumn{1}{c}{$W_1$} &  & \multicolumn{1}{c}{$W_2$}                &  & \multicolumn{1}{c}{$W_3$}   \\
             \hline 
			$w\indexk{1}(x,y)$ &  & $(1-\min(x,y))(1-|x-y|)$  &  & $1-\sum_{k=2}^{4}w^{(k)}(x,y)$           &  & $\propto 3xy$                                      \\
			$w\indexk{2}(x,y)$ &  & $|x-y|(1-\min(x,y))$      &  & $\sqrt{|x-y|}/2$        &  & $\propto 3\sin(2\pi x)\sin(2\pi y)$                \\
			$w\indexk{3}(x,y)$ &  & $\min(x,y)(1-|x-y|)$      &  & $|\sin(2\pi x)\sin(2\pi y)|/2$              &  & $\propto \exp\left(-3((x-0.5)^2 + (y-0.5)^2)\right)$ \\
			$w\indexk{4}(x,y)$ &  & $\min(x,y)|x-y|$          &  & $\min(x,y)/2$ &  & $\propto 2-3(x+y)$
		\end{tabular}}}
	\caption{Decorated graphon parameters used in the simulations. For $W_3$, the entries in the table are unnormalized: at each $(x,y)$, the four values are passed through a softmax transform $w^{(l)}(x,y) = \exp(g^{(l)}(x,y))/\sum_{l'}\exp(g^{(l')}(x,y))$ to obtain valid probabilities.}
	\label{tab:sim}
\end{table}

We consider $n=\{500,1000,\ldots,5000\}$ and run $10$ simulation trials. More specifically, for a fixed $n$ and $W\in\{W_1,W_2,W_3\}$, for each repetition we draw $\xi_i \sim U[0,1]$ and $A_{ij} \sim W(\xi_i,\xi_j)$. Figure~\ref{fig:sim_rate_cat_both} shows that the errors behave as predicted by our theory (Theorem~\ref{theorem:holder-rate}), and confirms the quality of our estimator both when we a-priori know the block sizes and when we estimate them.

\begin{figure}[h!]
    \centering
    \includegraphics[width=14cm]{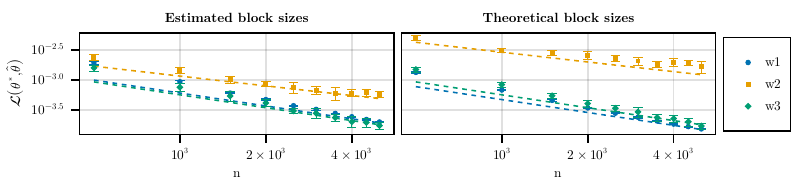}
    \caption{The MSE of our estimator on finitely decorated graphs. Each point represents the average MSE over $10$ independent repetitions, and the error bars represent the standard error. The dashed lines represent the theoretical rates of convergence. Block sizes were estimated  on the left and set to the optimal $\left\lceil n^{\alpha/(\alpha+1)}\right\rceil$ on the right (see Theorem~\ref{theorem:holder-rate}).}
	\label{fig:sim_rate_cat_both}
\end{figure}

Any multiplex network with a finite number of layers can be seen as a finitely decorated graph \citep{kivela_multilayer_2014}. Consider a multiplex network comprising $n$ common nodes across $T$ layers. 
A multiplex network is strictly equivalent to a  $\{0,1\}^T$-decorated graph where $A_{ij}\indexk{t}= A_{ji}\indexk{t} = 1$ if there exists an edge between nodes $i$ and $j$ in layer $t$. As an illustration, we fix $T=2$ and enumerate the different decorations in lexicographic order $\K =  \{[0,0],[1,0],[0,1],[1,1]\}$. 
This is an extension of the multiplex stochastic block model of Barbillon et al. \citep{barbillon_stochastic_2017} in the same way that a graphon extends an SBM\@. With this interpretation, we can see that $W_1$ is the decorated graphon for a 2-layer multiplex network with independent layers and marginal probabilities $\min(x,y)$ and $|x-y|$. 
On the other hand, $W_2$ and $W_3$ encode dependencies across the layers as the probabilities of the bivariate Bernoulli decorations cannot be written as a product of the marginal probabilities. 

\begin{figure}[h!]
	\centering
	\includegraphics[width=14cm]{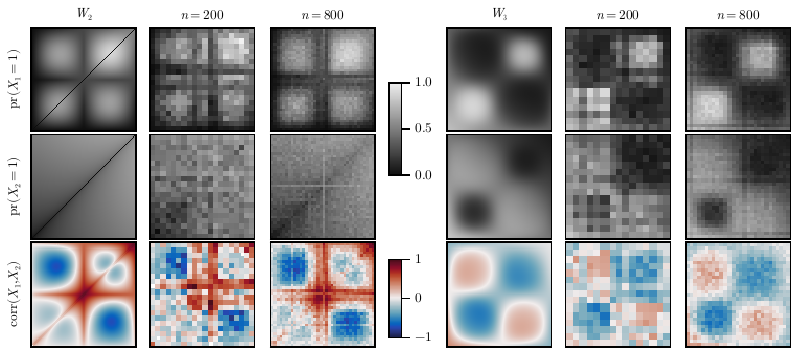}
	\caption{Ground truth $W_2$ and $W_3$ (see \cref{tab:sim}) and estimated graphon with an increasing number of nodes observed. The blocks of the estimator were relabeled to align visually with the ground truth. The difference in smoothness between the two graphons is particularly visible in the correlation between the two layers.}
	\label{fig:sim_picture}
\end{figure}

Figure~\ref{fig:sim_rate_cat_both} shows that our method is equally effective at handling dependent and independent layers. 
For bivariate Bernoulli random variables, we can alternatively parametrize its distribution using the two marginals and the correlation \citep{teugels_representations_1990}.  We use this parametrization to visualize the estimated decorated graphon for different numbers of nodes in Figure~\ref{fig:sim_picture}.

For the compactly decorated case, we look at $\operatorname{Kumaraswamy}(a,b)$ distributions for their wide variety of shapes on the unit interval (similar to Beta distributions, but with faster computational evaluation of the likelihood) \citep{kumaraswamy_generalized_1980}. When testing the second part of Theorem~\ref{theorem:rate-compact}, we restrict $a,b > 1$, ensuring the Lipschitz assumption of the densities. We reuse the same setup as before. Figure~\ref{fig:sim_rate_cont} shows the $L_2$ norm between densities, and the Wasserstein $1$ distance ($\mathcal{W}_1$) between distribution functions. The results are consistent with the theory (see Theorem~\ref{theorem:rate-compact}), which shows that our method can be used to estimate the distribution of the decorations in a compactly decorated graphon.

\begin{figure}[h!]
    \centering
	\includegraphics[width=14cm]{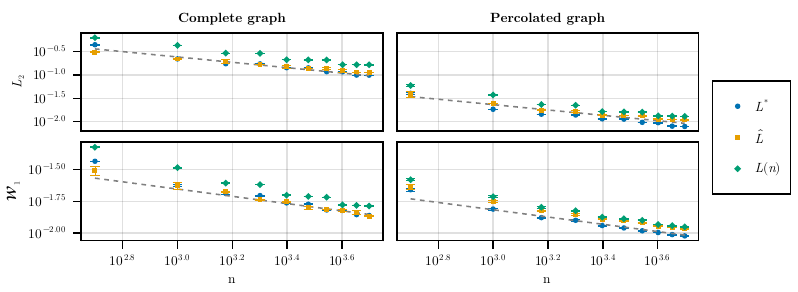}
	\caption{The  error of our estimator for compactly decorated complete graphs (weight of the atom at $0$ is $0$) and percolated (weight of the atom is $1-\log(1-\max(x,y)/2)$). We explore different discretization methods for the interval $[0,1]$: $L^* = n^{2\alpha/(4\alpha+3)}$, $\hat{L}$ is derived from the estimated block sizes and $L(n)=0.4\log(\log(n))^4$ from \citep{xu_optimal_2020}. The dashed lines represent the theoretical rates of convergence from Theorem~\ref{theorem:rate-compact}.}
	\label{fig:sim_rate_cont}
\end{figure}

\section{Illustrative Data Analysis}
\label{section:ada}

We apply our method to a comprehensive dataset of 779 diseases, revealing intricate genetic and symptomatic relationships discussed by Halu et al. \citep{halu_multiplex_2019}. The multiplex network comprises a genotype and phenotype layer where nodes represent diseases. Diseases are linked in the phenotype layer if they share a common symptom, and in the genotype layer, if linked to a common gene. 
The original dataset reports the number of connections as weights in each layer: in the genotype (phenotype) layer, only 130 (330) pairs of diseases have a weight greater than 1 in at least one layer. While the compactly decorated framework introduced accommodates such continuous features, nonparametrically estimating the joint distribution of these weights via a two-dimensional histogram requires a substantially larger sample size than this dataset affords. Consequently, because our primary interest lies in capturing the joint connectivity structure and cross-layer dependencies rather than the granular connection strengths, we binarize these weighted adjacency matrices. By keeping only the presence or absence of links between diseases, we effectively cast the system as a multiplex network, allowing for robust estimation of the cross-layer correlation structure.

The range of the estimated correlation (bottom row of Figure~\ref{fig:diseasome}) being mostly positive aligns with Halu et al.'s finding that diseases with common genetic constituents tend to share symptoms. The clustering seems to identify biologically meaningful disease groups, including autoimmune-inflammatory disorders (e.g., rheumatoid arthritis, multiple sclerosis, and ulcerative colitis in group $1$), hereditary cancer syndromes (e.g., Li-Fraumeni syndrome, familial adenomatous polyposis, and BRCA-associated breast cancer in group $5$), and neuromuscular-cardiovascular conditions (e.g., Duchenne muscular dystrophy, hypertrophic cardiomyopathy, and long QT syndrome in group $8$). While clustering is not our primary goal, this highlights the coherence of our method in uncovering shared mechanisms and relationships within complex disease networks.

\begin{figure}[h!] 
	\centering
	\includegraphics[width=14cm]{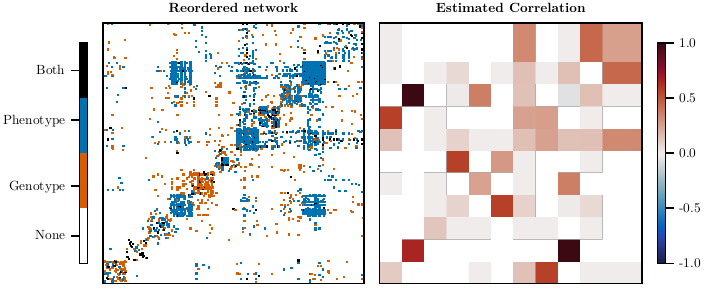}
	\caption{Top: Raw adjacency matrix representing the  Multiplex Network of Human Diseases.  Bottom: An ordered matrix of the network, organized by the refined categorizations derived from our decorated graphon estimation method, illustrating clearer patterns of disease interrelations.}
	\label{fig:diseasome}
\end{figure}

The coherence of these groupings indicates that the estimated block assignments may reflect some latent disease structure. Combined with the positive cross-layer correlation visible in Figure~\ref{fig:diseasome}, this suggests that the estimated decorated graphon $\hat{W}$ captures sufficient dependence between the genotype and phenotype layers to transfer information from one to the other.

We exploit this cross-layer signal by exploring whether phenotypic similarity can predict missing genetic links, a task of practical relevance since identifying shared symptoms is straightforward while uncovering shared genetic underpinnings is more complex and costly. To this end, we emulate a scenario with incomplete data following \citep{gao_rateoptimal_2015}. Rather than introducing new diseases, we randomly mask edges (with a probability of $0.5$) for a random small subset $(5\%)$ of nodes as illustrated in Figure~\ref{fig:link_pred}. We then fit our decorated graphon estimator to this partially observed network. To predict a missing edge, we reintroduce the masked phenotype information and compute the estimated probability of an edge in the genotype layer conditional on the observed phenotype layer. 
Repeating this experiment $100$ times yielded an average Area Under the Curve (AUC) of $0.746 \pm 0.074$. Performing the same experiment with a standard binary graphon estimator (which only considers the genotype layer) resulted in an average AUC of $0.635\pm 0.079$. These results, shown in Figure~\ref{fig:link_pred}, confirm that our estimator successfully captures the correlation structure between layers, allowing it to reconstruct missing genetic interactions using the signal preserved in the phenotypic layer.

\begin{figure}[h!] 
	\centering
	\includegraphics[width=14cm]{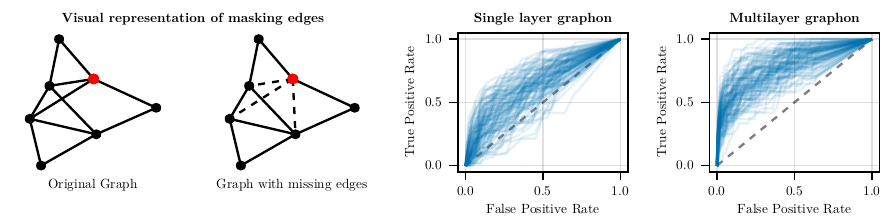}
	\caption{Link prediction experiment on the human disease multiplex network. Left: illustration of the edge masking procedure, where edges incident to a randomly selected node (red) are removed (dashed). Right: ROC curves over $100$ repetitions for predicting the masked genotype edges using a single-layer (genotype only) graphon estimator (AUC $= 0.635 \pm 0.079$) and our multilayer decorated graphon estimator, which conditions on the observed phenotype layer (AUC $= 0.746 \pm 0.074$).}
	\label{fig:link_pred}
\end{figure}

\section{Discussion}
\label{section:discussion}

We have introduced the first nonparametric estimation method for decorated graphons with compactly supported decorations, thus providing a comprehensive approach to infer the underlying generative mechanisms of a broad class of complex networks. Combining a piecewise-constant approximation of the graphon with a one-hot encoding (for finite decorations) or histogram discretization (for continuous decorations), our estimator achieves convergence rates that generalize those of classical binary graphon estimation. The rates decompose transparently into clustering, nonparametric, and (in the continuous case) discretization components, and they apply uniformly across the block (SBM) and shape (SSM) approximations of \citep{gao_rateoptimal_2015,verdeyme_hybrid_2024}.

Several directions remain open. First, while our framework accounts for zero-inflated edge distributions via an atom at zero (Assumption~\ref{assumption:DistributionFunction}), the underlying graphon model~\eqref{eq:aldous-hoover-like} is inherently dense: the expected degree grows linearly in $n$. Extending decorated graphon estimation to the sparse regime, where the expected degree grows sublinearly in $n$, would substantially broaden applicability; recent progress on sparse binary graphons \citep{klopp_oracle_2017,borgs_lp_sparse_I_2019} suggest some possible paths forward. Second, our method currently treats all edges as independent conditional on the latent variables. Incorporating node covariates, as in \citep{chandna_local_2022}, could improve estimation when such information is available. Third, the SSM framework offers flexibility through the shape geometry, but the question of adaptively selecting shapes based on local smoothness of $W$ (rather than using a uniform diameter bound) remains open; see Remark~\ref{remark:appendix_static_shape_number}. Fourth, while our data application demonstrates the benefit of jointly modeling multiple network layers, the sample size required for nonparametric estimation of continuous edge distributions can be demanding. Developing semiparametric alternatives that impose structure on the edge distribution while retaining the nonparametric treatment of the graphon function is a promising direction. Finally, as the size and complexity of network data grow, designing computationally efficient algorithms to scale these procedures remains an interesting and necessary challenge \citep{luo_computational_2024}.

\bibliographystyle{plainnat}
\bibliography{Library.bib}

\clearpage
\appendix

\section{Proofs of the Main Theorems: Finitely Decorated Case}
\label{sec:finitely_decorated}
The proofs extend the arguments of Gao et al.\ \citep{gao_rateoptimal_2015}, Klopp et al.\ \citep{klopp_oracle_2017}, and Verdeyme and Olhede \citep{verdeyme_hybrid_2024} to the decorated setting. The main new difficulty is that, for fixed $i,j$, the coordinates $\obs_{ij}\indexk{1},\ldots,\obs_{ij}\indexk{\numK}$ are dependent even conditionally on the latent variables; this requires replacing Hoeffding's inequality with McDiarmid's in the concentration arguments. The arguments are stated for $(s,k)$-SSM approximations since they cover the SBM as the special case $s=\binom{k+1}{2}$; no step of the proof relies on the additional flexibility of shapes.

The appendix is organized as follows. Appendix~\ref{appendix_a:oracle_ssm} establishes the existence of an oracle estimator within bounded stochastic shape models via a polyomino tiling argument (a polyomino is an edge-connected union of grid squares \citep{golomb_polyominoes_1996}). Appendix~\ref{appendix:transformation} recalls the binary encoding. Appendix~\ref{subsec:proof_shape_rate} proves Theorem~\ref{theorem:shape-rate}, and Appendix~\ref{subsection:holder-rate} proves Theorem~\ref{theorem:holder-rate} (H\"older assumption). Appendix~\ref{sec:oracle} collects the oracle inequalities, Appendix~\ref{sec:compactly_decorated} treats the compactly decorated case, and Appendix~\ref{sec:optimization} discusses practical optimization.

The two-level mapping underlying the $(s,k)$-SSM (and its SBM specialization) is illustrated in Figure~\ref{fig:mapping}: the unit interval is first split into $k$ equal blocks, and the resulting $k\times k$ grid of block pairs is then partitioned into $s\leq\binom{k+1}{2}$ shapes. When $s=\binom{k+1}{2}$, each block pair is its own shape and the construction coincides with a standard $k$-block SBM.

\begin{figure}[h]
    \includegraphics[width=14cm]{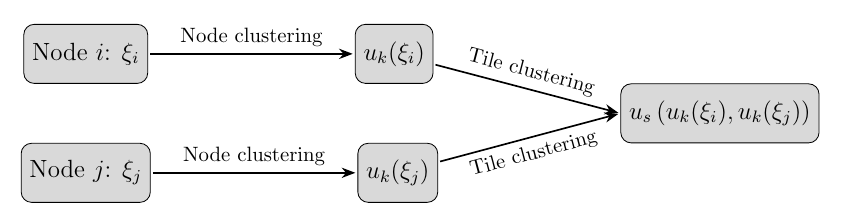}
	\caption{Schematic representation of the mapping from nodes to shape in a $(s,k)$-SSM. Nodes are first aggregated into blocks, which are then further mapped to shapes (image from Verdeyme and Olhede \citep{verdeyme_hybrid_2024}). The standard SBM is recovered when $s=\binom{k+1}{2}$.}
    \label{fig:mapping}
\end{figure}

\subsection{Notation}
Throughout the appendices, $\|\cdot\|$ denotes the Frobenius norm and $\langle\cdot,\cdot\rangle$ the Frobenius inner product unless stated otherwise. Let $\maps_{n,s, k}$ be the set of mappings $\map:[n]^2\to[s]$ of the form $\map(x,y) = u_s(u_k(x),u_k(y))$, where $u_k : [n]\to [k]$ is surjective and $u_s:[k]^2 \to [s]$. Each such $\map$ defines a clustering of node pairs into $s$ groups, each being a union of blocks. We drop the resolution subscript $k$ and write $\maps_{n,s}$ when it is clear from context; one has $|\maps_{n,s}|\leq \max(k,s)^n$ \citep{verdeyme_hybrid_2024}.

\subsection{On the Existence of the Oracle Estimator}
\label{appendix_a:oracle_ssm}

The oracle estimator depends only on a partition of $[0,1]^2$ into disjoint regions, not on the unknown $\probs^*$. For regular blocks, Gao et al.\ \citep{gao_rateoptimal_2015} partition into squares of side $1/k$. For shapes, the geometry of the tessellation is not determined by shape size alone. We therefore restrict to a subclass of stochastic shape models parametrized by resolution $k$ and a diameter bound $d$:

\begin{definition}
	\label{definition:bounded_ssm}
	For $k\in \mathbb{N}$ and $d\in \{1,\ldots,2k\}$, the set $\boundedssm{}$ of \emph{$d$-bounded stochastic shape models with resolution $k$} consists of all $(s,k)$-SSMs (for any $s>0$) in which every pair of blocks assigned to the same shape has Manhattan distance between their centers at most $d/k$.
\end{definition}

\begin{remark}
	This discussion is related to the notion of \emph{elongation} introduced by Verdeyme \citep{verdeyme_nonparametric_2025}, which measures the extent to which shapes are stretched in one direction compared to another.
\end{remark}

Often, we will represent an element of $\boundedssm$ by a map $w:[0,1]\rightarrow \N$, assigning each pair of latents to a shape. Lemma~\ref{lemma:bounded_ssm_error} shows that an oracle estimator can be defined based only on the latent space's geometry.

\begin{lemma}
	\label{lemma:bounded_ssm_error}
	Given $W\in \mathcal{H}(\alpha,M)$ and $n\in \N$, let $\xi \overset{\text{iid}}{\sim} U[0,1]$ and $\theta_{ij}=W(\xi_i,\xi_j)$. For any $k\in [n]$ and $d\in \{1,\ldots,2k\}$, any $w \in \boundedssm$ is such that defining $\map:[n]^2\rightarrow [s]$ as $\map(i,j)=w(\xi_i,\xi_j)$ leads to
	\begin{equation*}
		\frac{1}{n^2}\|\probs - \bar{\probs}(\map^*)\|_F^2 \leq 9M^2\left(\frac{k}{d}\right)^{-2(\alpha\wedge 1)}.
	\end{equation*}
\end{lemma}

\begin{proof}[Proof of\ \ Lemma~\ref{lemma:bounded_ssm_error}]
	Consider a shape containing $(\xi_i,\xi_j)$ and $(\xi_u,\xi_v)$ (i.e., $w(\xi_i,\xi_j)=w(\xi_u,\xi_v)$) and let $a,b$ be their pixels (or block of length $1/k$) with centers $(x_a,y_a)$ and $(x_b,y_b)$ respectively. Then we know that
	\begin{align*}
		|\xi_i-\xi_u| + |\xi_j-\xi_v| &\leq \underbrace{|\xi_i-x_a|}_{\leq 1/2k} + \underbrace{|\xi_j-y_a|}_{\leq 1/2k} + \underbrace{|x_a-x_b| + |y_a-y_b|}_{\leq d/k} +  \underbrace{|\xi_u-x_b|}_{\leq 1/2k} + \underbrace{|\xi_v-y_b|}_{\leq 1/2k} \\ &\leq (d+2)/k.
	\end{align*}
The remainder follows from the H\"older condition. For $(i,j)$ with $\map(i,j)=c$,
	\begin{align*}
		\|\theta_{ij} -\bar{\theta}_{a}(\map) \| & =  \|f(\xi_i,\xi_j) - \bar{\theta}_{c}(\map)\|                                                     \\
		                                         & = \|f(\xi_i,\xi_j) - \frac{1}{n_c}\sum_{\map(u,v)=c}\theta_{uv}\|                                  \\
		                                         & \leq \frac{1}{n_c^*} \sum_{\map(u,v)=c}\|f(\xi_i,\xi_j) -f(\xi_u,\xi_v)\|                          \\
		                                         & \leq \frac{1}{n_c^*} \sum_{\map(u,v)=c}M\left(|\xi_i-\xi_u|+|\xi_j-\xi_v|\right)^{\alpha \wedge 1} \\
		                                         & \leq  M\left(\frac{d+2}{k}\right)^{\alpha\wedge 1}                                                 \\
		                                         & \leq 3 M\left(\frac{d}{k}\right)^{\alpha\wedge 1}.
	\end{align*}
	Squaring and summing over all $(i,j)$ finishes the proof.
\end{proof}

Definition~\ref{definition:bounded_ssm} does not fix the number of shapes. For given $(k,d)$, computing the minimum $s^*$ such that $\boundedssm{}$ contains an element with $s^*$ shapes is NP-hard, since it reduces to tiling the $k\times k$ grid with polyominoes of diameter at most $d$ \citep{demaine_jigsaw_2007}. We therefore derive upper and lower bounds. When $d=0$, each block is its own shape, giving $k(k+1)/2$ shapes. For $d>0$, a shape of diameter $d$ in the Manhattan metric covers at most $(d^2+2d)/2$ blocks when $d$ is even and $(d^2+1)/2$ blocks when $d$ is odd (see Figure~\ref{fig:covering}). Since every element of $\boundedssm$ must tile $[0,1]^2$, the number of shapes $s$ satisfies
$$s\geq \lbshapes{d}{k} = \begin{cases}
		(k^2+k)/2        & \text{ if } d =0               \\
		(k^2+k)/(d^2+1)  & \text{ if } d \text{ is odd}   \\
		(k^2+k)/(d^2+2d) & \text{ if } d \text{ is even}.
	\end{cases}$$

\begin{figure}[h!]
	\centering
	\includegraphics[width=14cm]{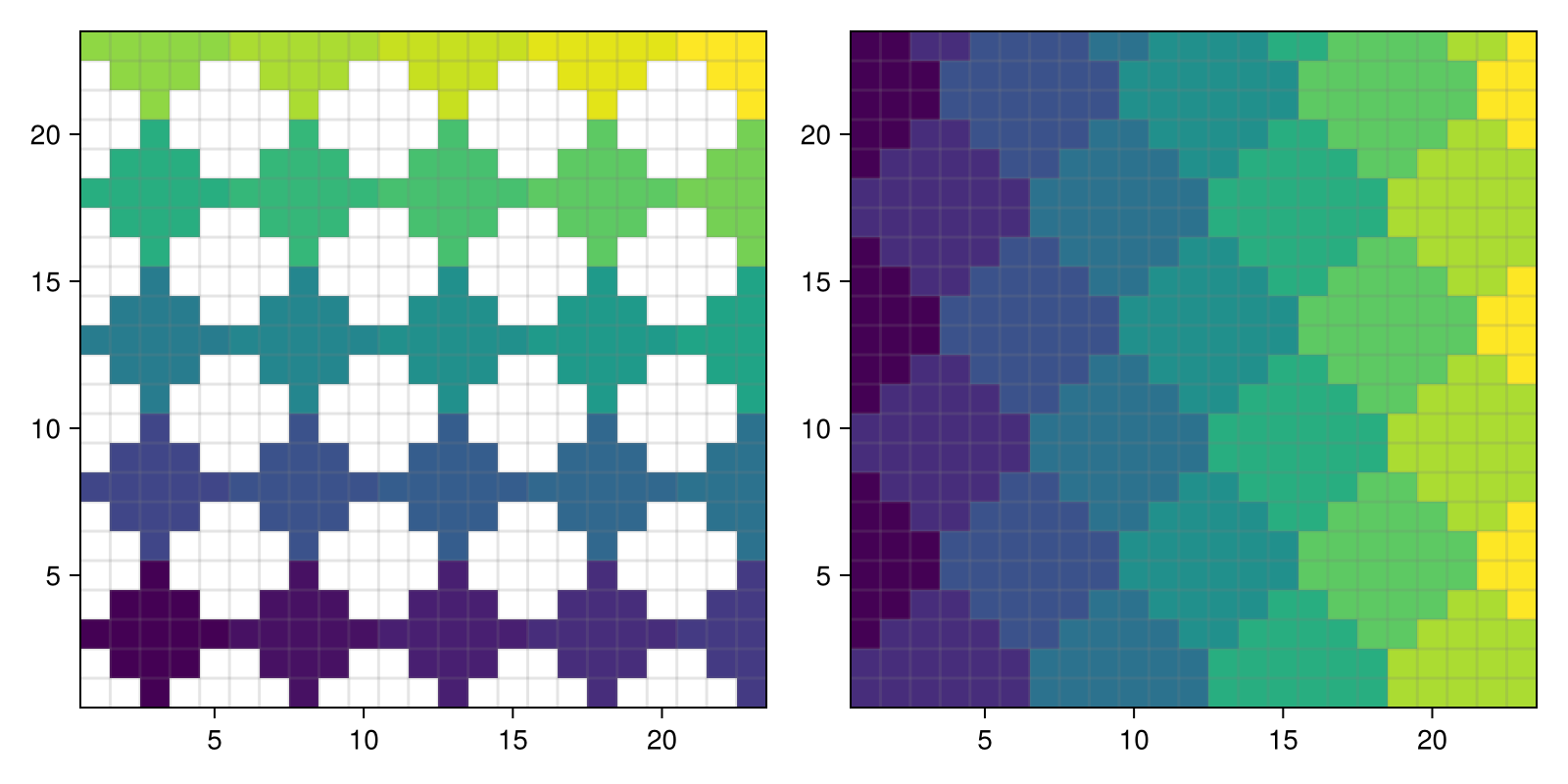}
	\caption{Examples of ensemble of shapes in $\boundedssm[5][23]$ (left) and $\boundedssm[6][23]$ (right). Colored shapes maximize the area they cover, given the diameter constraint.}
	\label{fig:covering}
\end{figure}

To obtain an upper bound on the minimum number of shapes $s^*$, we explicitly construct a covering of $[0,1]^2$ using regular tilings (see Figure~\ref{fig:covering}). When $d\geq k$, five shapes suffice; when $d<k$, the construction yields
\begin{equation*}
	\operatorname{UB}(d,k) \leq \begin{cases}
		(k^2+k)/2                       & \text{ if } d =0               \\
		2k^2/d^2 + 6k/d + 5             & \text{ if } d \text{ is odd}   \\
		2k(k+2)/(d(d+2)) + 5k/(d+2) + 3 & \text{ if } d \text{ is even}.
	\end{cases}
\end{equation*}

For fixed $k\geq 1$ and $1\leq d< k$, there exists an element of $\boundedssm$ with $s^*$ shapes, with  $$9\left(\frac{k^2}{d^2} + \frac{k}{d}+1\right) \geq \ubshapes{d}{k} \geq s^* \geq  \lbshapes{d}{k} \geq \frac{k(k+1)}{d^2+1}.$$
This element is used for building the oracle estimator, and Lemma~\ref{lemma:best_approx_lemma} shows the error this oracle makes.

\begin{lemma}
	\label{lemma:best_approx_lemma}
	Under Assumption~\ref{assumption:holder}, for any $k>1$ and $d\in \{1,\ldots,2k\}$ there exists $\map^*$ representing an element of $\boundedssm{}$ with less than $\ubshapes{d}{k}$ shapes such that
	\begin{equation*}
		\frac{1}{n^2}\|\probs - \bar{\probs}(\map^*)\|_F^2 \leq 9M^2\left(\frac{k}{d}\right)^{-2(\alpha\wedge 1)}.
	\end{equation*}
\end{lemma}

\begin{proof}[Proof of\ \ Lemma~\ref{lemma:best_approx_lemma}]
	Based on Lemma~\ref{lemma:bounded_ssm_error}, we know that any element of $\boundedssm$ satisfies the inequality. We only need to find one with a number of shapes $s^*$ smaller than $\ubshapes{d}{k}$.
\end{proof}

The mapping $\map$ provides an equivalent parametrization of the $(s,k)$-SSM parameter space:
$$\Theta_{s,k}=\left\{\left\{\probs_{i j}\right\} : \probs_{ij}=Q_{\map(i,j)},\; \map \in \maps_{n,s,k} \right\}.$$
For $\eta \in \mathbb{R}^{n\times n \times \numK}$, the shape average is $\bar{\eta}_c(\map) = |\map^{-1}(c)|^{-1}\sum_{(i,j):\map(i,j)=c}\eta_{ij} \in \mathbb{R}^{\numK}$,
where $\map^{-1}(c) = \{(i,j) \in [n]^2 : \map(i,j)=c\}$. The least-squares estimator~\eqref{eq:least-squares-formulation} is then equivalent to finding $\hat{\map}$ and setting $\probshat = \bar{\obs}(\hat{\map})$ \citep{wolfe_nonparametric_2013, gao_rateoptimal_2015, verdeyme_hybrid_2024}.

\begin{remark}
	The $\left(\binom{k}{2},k\right)$-SSM coincides with the stochastic block model with $k$ blocks.
\end{remark}

\subsection{Binary encoding}
\label{appendix:transformation}
We recall the one-hot encoding introduced in Section~\ref{section:inference_finitely} and fix notation for the proofs that follow.
Let $\obs_{ij} \in \{0,1\}^\numK$ with $\obs_{ij}\indexk{l} = 1$ if $\adj_{ij} = x_l \in \K$ and $0$ otherwise. For $p \in [0,1]^\numK$ with $\|p\|_1=1$, we write $X \sim \gbern{p}$ if $\pr{x} =\prod_{\indexK} p_\indexK^{x_\indexK}$ for $x\in \setobs$ and $0$ otherwise. Under eq.~\eqref{eq:aldous-hoover-like}, we have $\obs_{ij} \mid \boldsymbol{\xi} \overset{\text{iid}}{\sim} \gbern{\probs_{ij}}$,
where  $\probs_{ij}\in [0,1]^\numK$  represents $W(\xi_i,\xi_j)$. The key property is that $\obs_{ij}$ is fully determined by its expectation $\E{\obs_{ij}\mid \boldsymbol{\xi}} = \probs_{ij}$.

\subsection{Proof of Theorem~\ref{theorem:shape-rate}, Stochastic Shape Model Assumption}
\label{subsec:proof_shape_rate}

In this case, we denote the true value of each shape by ${Q^*_c} \in [0,1]^{s\times \numK}$ and the oracle assignment by $\map^*\in \maps_{n,s}$ such that $\probs_{ij} = Q^*_{\map(i,j)}$ for any $i\neq j$. For these estimated $\hat{\map}$, we define ${\tilde{Q}_c} \in [0,1]^{s\times \numK}$ by $\tilde{Q}_c = \bar{\probs}_{c}(\hat{\map})$, and $\probstilde_{ij} = \tilde{Q}_{\hat{\map}(i,j)}$ for any $i\neq j$. The diagonal elements of $\probstilde$ are set to the probability distribution on $\K$ that puts all its mass at the zero element.

\begin{proof}[Proof of\ \ Theorem~\ref{theorem:shape-rate}]
	Since $\probshat$ minimizes the least-squares objective, $\|\probshat-\obs\|_F^2 \leq \|\probs - \obs\|_F^2$, which gives
	\begin{equation*}
		\|\probshat-\probs\|_F^2 \leq 2 \langle\probshat-\probs, \obs-\probs\rangle,
	\end{equation*}
	which can be further bounded by
	\begin{equation}
		\label{eq:bounds_ssm}
		\|\probshat-\probs\|^2 \leq 2\|\probshat - \probstilde\|\left|\left\langle\frac{\probshat-\probstilde}{\|\probshat-\probstilde\|}, \obs-\probs\right\rangle\right| + 2\left(\|\probshat - \probstilde\| + \|\probshat-\probs\|\right)\left|\left\langle\frac{\probstilde-\probs}{\|\probstilde-\probs\|}, \obs-\probs\right\rangle\right|.
	\end{equation}
	Lemmas~\ref{lemma:inner_product},\ref{lemma:inner_product_true}, and \ref{lemma:norm} show that for any $C'>0$, there exists a $C>0$ such that all of the following three terms are bounded by $C\sqrt{\numK s + n\log(\max(k,s))}$ with probability at least $1-3\exp\left(-C'n\log(\max(k,s))\right)$:
	\begin{equation*}
		\underbrace{\|\probshat - \probstilde\|\vphantom{\left|\left\langle\frac{\probshat-\probstilde}{\|\probshat-\probstilde\|}\right\rangle\right|} }_{\text{Lemma~\ref{lemma:norm}}}, \quad \underbrace{\left|\left\langle\frac{\probshat-\probstilde}{\|\probshat-\probstilde\|}, \obs-\probs\right\rangle\right|}_{\text{Lemma~\ref{lemma:inner_product}}}, \quad \underbrace{\left|\left\langle\frac{\probstilde-\probs}{\|\probstilde-\probs\|}, \obs-\probs\right\rangle\right|}_{\text{Lemma~\ref{lemma:inner_product_true}}}.
	\end{equation*}
	Combining this bound with eq.~\eqref{eq:bounds_ssm}, we obtain
	\begin{equation*}
		\|\probshat-\probs\|^2 \leq 2C \|\probshat-\probs\|\sqrt{\numK s + n\log(\max(k,s))} + 4C^2\left(\numK s + n\log(\max(k,s))\right).
	\end{equation*}
	Solving for $\|\probshat-\probs\|$ and setting $C_1=(1+\sqrt{5})C$, we get
	\begin{equation*}
		\|\probshat-\probs\| \leq C_1 \sqrt{\numK s + n\log(\max(k,s))},
	\end{equation*}
	with probability at least $1-3\exp\left(-C'n\log(\max(k,s))\right)$. To get the bound in expectation, let $\varepsilon^2 = C_1\left(\numK s/n^2 + \log(\max(k,s))/n\right)$, we have
	\begin{align*}
		\E{n^{-2}\|\probshat-\probs\|^2} & = \E{n^{-2}\|\probshat-\probs\|^2 1_{\{n^{-2}\|\probshat-\probs\|^2 \leq \varepsilon^2\}} } + \E{n^{-2}\|\probshat-\probs\|^2 1_{\{n^{-2}\|\probshat-\probs\|^2 > \varepsilon^2\}}} \\
		                                 & \leq \varepsilon^2 + \mathrm{pr}\left(n^{-2}\|\probshat-\probs\|^2 > \varepsilon^2\right)                                                                                           \\
		                                 & \leq \varepsilon^2 + 3\exp\left(-C'n\log(\max(k,s))\right)                                                                                                                          \\
		                                 & \leq C_1\left(\frac{\numK s}{n^2} + \frac{\log(\max(k,s))}{n}\right) + 3\exp\left(-C'n\log(\max(k,s))\right).
	\end{align*}
	Since $\varepsilon^2$ is the dominating term, this concludes the proof.
\end{proof}

\subsection{Proof of Theorem~\ref{theorem:holder-rate}: Hölder Assumption}
\label{subsection:holder-rate}
\label{appendix:technical}

We remind the reader of Theorem~\ref{theorem:holder-rate}, which extends the results of Verdeyme and Olhede \citep{verdeyme_hybrid_2024} and Gao et al. \citep{gao_rateoptimal_2015} to the finitely decorated graphon case. For ease of reading, we split the statement of the theorem in two parts: Theorem~\ref{theorem:holder-rate-apppendix} deals with the value estimation, while Proposition~\ref{proposition:function_estimation} with the function estimation. The proof of Theorem~\ref{theorem:holder-rate} is then a direct consequence of these two results.

\begin{theorem}
	\label{theorem:holder-rate-apppendix}
	For $W \in \mathcal{H}(\alpha, M)$, $n> \numK$, for any $C^{\prime}>0$, there exists a constant $C>0$ only depending on $C^{\prime}, M, \alpha$, such that the following holds. For any $\Delta\in[0,1]$, set $k=\left\lceil n^{(1+\Delta\alpha)/(\alpha \wedge 1 + 1)} \right\rceil$. Then there exists $s = O\!\left( n^{2/(\alpha \wedge 1+1)}\right)$ such that all shapes have diameter\footnote{For a $(s,k)$-SSM, the diameter of a shape is the maximum $\ell_1$-distance between any two points in that shape region of $[0,1]^2$.} at most $n^{-1/(\alpha \wedge 1 + 1)}$, and
	$$
		\frac{1}{n^2} \sum_{i j}\|\probshat_{i j}-\probs_{i j}\|_2^2 \leq C\left(\numK n^{-2 \alpha /(\alpha+1)}+\frac{\log (n)}{n} \right),
	$$
	with probability at least $1-\exp \left(-C^{\prime} n\right)$. Furthermore,
	$$
		\sup _{W \in \mathcal{H}(\alpha, M)} \E{\frac{1}{n^2} \sum_{i, j \in}\|\probshat_{i j}-\probs_{i j}\|_2^2} \leq C_1\left(\numK n^{-2 \alpha /(\alpha+1)}+\frac{\log (n)}{n}\right),
	$$
	for some other constant $C_1>0$ only depending on $M$. Both the probability and the expectation are jointly over $\left\{\adj_{i j}\right\}$ and $\left\{\xi_i\right\}$.
\end{theorem}

\begin{proof}[Proof of\ \ Theorem~\ref{theorem:holder-rate-apppendix}]
The strategy is to combine the oracle inequality from Theorem~\ref{theorem:shape-rate} with the approximation bound from Lemma~\ref{lemma:best_approx_lemma}, and then optimize the resolution parameters $k$ and $d$.

We set $\map^*$ to the oracle from Lemma~\ref{lemma:best_approx_lemma}, and define $Q^*_c = \bar{\probs}_{c}(\map^*)$, $\probs^*_{ij} = Q^*_{\map(i,j)}$ for $i\neq j$, and $\probs^*_{ii}=\delta_{0_{\K}}$.
	The argument follows Verdeyme and Olhede \citep[Proof of Theorem 3.3]{verdeyme_hybrid_2024}; we detail the final optimization step. Set $k=n^\delta$ and $d=n^\beta$ for $\delta,\beta \in [0,1]$, so that shapes have diameter $d/k$. For all $C'>0$, there exists $C_1>0$ such that
	\begin{equation}
		\label{eq:bound_before_delta}
		\frac{1}{n^2}\|\probshat-\probs\|_F^2 \leq C_1\left(\left(\frac{k}{d}\right)^{-2(\alpha \wedge 1)} + \frac{Ls^*}{n^2} + \frac{\log(\max(k,s^*))}{n}\right),
	\end{equation}
	with probability at least $1-\exp(-C'n)$.
	Based on Lemma~\ref{lemma:best_approx_lemma}, we have that $s^*$ can be bounded, such that for $\delta>\beta$, for $C_2>0$ we obtain
	\begin{align*}
		\frac{1}{n^2}\|\probshat-\probs\|_F^2 & \leq C_2\left(\left(\frac{k}{d}\right)^{-2(\alpha \wedge 1)} + \frac{k^2}{d^2}\frac{L}{n^2} + \frac{\log(\max(k,k^2/d^2))}{n}\right) \\
		                                      & \leq 2C_2\left(n^{-2(\delta-\beta)(\alpha \wedge 1)} + Ln^{-2+2(\delta-\beta)}+ \frac{\log(n)}{n}\right).
	\end{align*}

	For any pair $\delta,\beta$ such that $\delta-\beta=1/(\alpha+1)$, we obtain the rates
	\begin{equation*}
		\begin{cases}
			n^{-2\alpha/(\alpha+1)} & \text{ if } \alpha < 1 \\
			\log(n)/n               & \text{ else}.
		\end{cases}
	\end{equation*}
    The diameter of the shapes is upper bounded by $d/k = n^{\beta-\delta} = n^{-1/(\alpha+1)}$. This concludes the proof.
\end{proof}

\begin{remark}
    \label{remark:appendix_static_shape_number}
	For any optimal $\beta,\delta$ as above, the minimal number of shapes in $\boundedssm$ is of order $n^{2(\delta-\beta)}=n^{2\alpha/(\alpha+1)}$ (see \cref{appendix_a:oracle_ssm}). With a uniform diameter constraint, this yields only a linear reduction in parameters compared to the standard block model. Allowing shapes of different diameters, adapted to the local smoothness of $W$, could improve this. A natural extension would be to let each block $(i,j)$ choose its own shape diameter $d_{ij}$, smaller where $W$ varies rapidly and larger in smoother regions. This is analogous to spatially adaptive histogram estimators \citep{lepski_optimal_1997} and variable-bandwidth kernel methods, where the bandwidth is selected locally. Such an approach would reduce the total number of shapes needed in regions where $W$ is smooth, while preserving resolution where it is not, potentially leading to improved rates that adapt to spatially inhomogeneous regularity. We leave this direction for future work.
\end{remark}

\begin{proposition}
	\label{proposition:function_estimation}
	For $W \in \mathcal{H}(\alpha, M)$, $n>\numK$, for any $C^{\prime}>0$, there exists a constant $C>0$ only depending on $C^{\prime}, M, \alpha$ and $\beta$, where the following holds
	$$
		\operatorname{MISE}\left(\widehat{W}_{\probshat}, W\right) \leq C\left(\numK n^{-2 \alpha /(\alpha+1)}+\frac{\log (n)}{n}  + n^{-\alpha \wedge 1}\right),
	$$
	with probability at least $1-\exp \left(-C^{\prime} n\right)$, with $s$ and $k$ as in Theorem~\ref{theorem:holder-rate}. Furthermore,
	$$
		\sup _{W \in \mathcal{H}(\alpha, M)} \E{\operatorname{MISE}\left(\widehat{W}_{\probshat}, W\right)} \leq C_1\left(\numK n^{-2 \alpha /(\alpha+1)}+\frac{\log (n)}{n} + n ^{-\alpha\wedge 1}\right),
	$$
	for some other constant $C_1>0$ only depending on $M$. Both the probability and the expectation are jointly over $\left\{\adj_{i j}\right\}$ and $\left\{\xi_i\right\}$.
\end{proposition}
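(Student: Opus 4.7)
The plan is to bound the MISE by decomposing it into the pointwise estimation error already controlled by \cref{theorem:holder-rate} plus a discretization / label-alignment bias of order $n^{-\alpha\wedge 1}$. Introduce the piecewise-constant ``oracle'' target $\widetilde{W}(x,y) := \probs_{\lceil nx\rceil,\lceil ny\rceil}$, which has the same block structure as $\widehat{W}$ but uses the true $\probs_{ij}$. Since $\widehat{W}$ is obtained by permuting the rows/columns of $\probshat$ according to the clustering produced by the estimator, one can match the indexing of $\widetilde{W}$ to that of $\widehat{W}$ by using the same permutation, because the MSE $\tfrac1{n^2}\sum_{ij}\|\probshat_{ij}-\probs_{ij}\|_2^2$ is invariant under joint permutations of $\probs$ and $\probshat$. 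Then for any measure-preserving bijection $\sigma\in\mathcal{M}$, the triangle inequality gives
\begin{equation*}
\iint\bigl\|W^\sigma(x,y)-\widehat{W}(x,y)\bigr\|_2^2\,\diff x\diff y
\;\leq\; 2\iint\bigl\|W^\sigma(x,y)-\widetilde{W}(x,y)\bigr\|_2^2\,\diff x\diff y
\;+\; 2\iint\bigl\|\widetilde{W}(x,y)-\widehat{W}(x,y)\bigr\|_2^2\,\diff x\diff y.
\end{equation*}

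The second integral equals $\tfrac{1}{n^2}\sum_{ij}\|\probshat_{ij}-\probs_{ij}\|_2^2$ by construction of the piecewise-constant estimators on the uniform $n\times n$ grid, and is thus directly bounded by \cref{theorem:holder-rate}, contributing the $\numK n^{-2\alpha/(\alpha+1)}+\log(n)/n$ terms both in probability and in expectation (with the same $s$, $k$, and the same exponential tail $1-\exp(-C'n)$).

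The first integral is handled by choosing $\sigma$ adapted to the sorted latent variables. Let $\tau$ be the permutation with $\xi_{\tau(1)}\leq\cdots\leq\xi_{\tau(n)}$, compose the clustering permutation with $\tau$ so that $\widetilde{W}$ is re-indexed in the sorted order, and let $\sigma$ be the measure-preserving bijection that sends each dyadic interval $[(i-1)/n,i/n]$ to an interval of equal length around $\xi_{\tau(i)}$ (obtained from the quantile transform $F_n^{-1}$ composed with a small measure-preserving perturbation to handle the gap lengths). Then $W^\sigma(x,y)=W(\sigma(x),\sigma(y))$ and $\widetilde{W}(x,y)=W(\xi_{\tau(\lceil nx\rceil)},\xi_{\tau(\lceil ny\rceil)})$ differ only through the displacement $|\sigma(x)-\xi_{\tau(\lceil nx\rceil)}|\leq |\xi_{(\lceil nx\rceil)}-\lceil nx\rceil/n|+1/n$, and similarly for $y$. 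Applying \cref{assumption:holder} and the standard bounds $\E{(\xi_{(i)}-i/n)^2}\lesssim 1/n$, $\sup_i|\xi_{(i)}-i/n|\lesssim \sqrt{\log n/n}$ with high probability, and using Jensen (for $\alpha\leq1$) or boundedness (for $\alpha>1$) on the exponent $2\alpha$, the integrated squared error is $O(n^{-\alpha\wedge 1})$ both in expectation and on an event of probability at least $1-\exp(-C'n)$ (possibly enlarging $C$ to absorb the intersection with the event from \cref{theorem:holder-rate}).

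The main obstacle is the careful bookkeeping in the third step: constructing the measure-preserving bijection $\sigma$ exactly so that the displacement it induces can be coupled uniformly to the gaps of uniform order statistics, and ensuring this construction is compatible with the clustering-based reordering used to define $\widehat{W}$. The rest amounts to combining \cref{theorem:holder-rate} with classical concentration for uniform order statistics, at which point the three rates $\numK n^{-2\alpha/(\alpha+1)}$, $\log(n)/n$, and $n^{-\alpha\wedge 1}$ emerge additively and the stated high-probability and supremum-expectation bounds follow.
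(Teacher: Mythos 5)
Your decomposition of the MISE into the pointwise estimation error (controlled by \cref{theorem:holder-rate}) plus the agnostic/discretization error $\E{\mathrm{MISE}(W_{\probs},W)}$ of order $n^{-\alpha\wedge 1}$ is exactly the paper's argument; the only difference is that the paper delegates both the decomposition and the order-statistics bound on the agnostic term to \citet{klopp_oracle_2017} (their Prop.~3.5), whereas you sketch that argument explicitly. Your proposal is correct and essentially identical in structure.
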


\begin{proof}[Proof of\ \ Proposition~\ref{proposition:function_estimation}]
	Klopp et al. \citep{klopp_oracle_2017} showed that the mean integrated error is bounded by

	\begin{equation*}
		\E{\mathrm{MISE}\left(\widehat{W}_{\probshat}, W\right)} \leq 2 \underbrace{\E{\frac{1}{n^2}\left\|\probshat-\probs\right\|_F^{2}}}_{\text{estimation error}} + 2\underbrace{\vphantom{\E{\frac{1}{n^2}\left\|\probshat-\probs\right\|_F^{2}}}\E{\mathrm{MISE}\left(W_{\probs},W\right)}}_{\text{agnostic error}},
	\end{equation*}
	where the agnostic error is the distance between the true graphon and its discretized version sampled at the unknown $\{\xi_i\}$. A direct adaptation of Klopp et al. \citep[Prop. 3.5]{klopp_oracle_2017} shows that the agnostic error  is bounded by $n^{-\alpha\wedge 1}$, which leads to
	\begin{equation}
		\sup _{W \in \mathcal{H}(\alpha, M)} \E{\operatorname{MISE}\left(\widehat{W}_{\probshat}, W\right)} \leq C_2\left( n^{-2 \alpha /(\alpha+1)}+\frac{\log (n)}{n} + n ^{-\alpha\wedge 1}\right),
	\end{equation}
	where we use Theorem~\ref{theorem:holder-rate} to bound the estimation error.
\end{proof}

\section{Oracle Inequalities}
\label{sec:oracle}

This section establishes the three concentration bounds used in the proof of Theorem~\ref{theorem:shape-rate}: Lemmas~\ref{lemma:inner_product} and~\ref{lemma:inner_product_true} control the inner-product terms in the decomposition~\eqref{eq:bounds_ssm}, and Lemma~\ref{lemma:norm} controls the norm term. The main technical tool is Lemma~\ref{lemma:covering_number_proba}, which adapts the chaining argument of Verdeyme and Olhede \citep[Lemma~A.1]{verdeyme_hybrid_2024} to account for the dependence across decoration coordinates via McDiarmid's inequality.

\begin{lemma}
	\label{lemma:inner_product}
	For any constant $C'>0$, there exists a constant $C>0$ only depending on $C'$ such that
	$$\left|\left\langle\frac{\probshat-\probstilde}{\|\probshat-\probstilde\|}, \obs-\probs\right\rangle_F\right| \leq C\sqrt{\numK s + n\log(\max(k,s))},$$

	with probability at least $1-\exp\left(-C'n\log(\max(k,s))\right).$
\end{lemma}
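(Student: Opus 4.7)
My plan is to treat this as a maximal-inequality problem over a data-dependent but low-dimensional set of directions. Since both $\probshat$ and $\probstilde$ assign a single $\numK$-vector per shape under the common assignment $\hat{\map}$, their difference lies in the linear subspace $V_{\hat{\map}} = \{\eta \in \mathbb{R}^{n\times n\times \numK} : \eta_{ij}$ depends on $(i,j)$ only through $\hat{\map}(i,j)\}$, which has dimension at most $s\numK$. To cope with the data-dependence of $\hat{\map}$, I would bound the larger quantity
\begin{equation*}
    \sup_{\map \in \maps_{n,s}}\;\sup_{\eta \in V_\map,\,\|\eta\|_F=1}\left|\langle \eta, \obs-\probs\rangle_F\right|,
\end{equation*}
using the combinatorial bound $|\maps_{n,s}|\leq \max(k,s)^n$ recalled from \citet{verdeyme_hybrid_2024} to absorb the outer supremum by a union bound.

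The pointwise concentration is where the dependence warned about in the preceding remark needs care. For a fixed unit-norm $\eta$ I would decompose $\langle \eta, \obs-\probs\rangle_F = \sum_{i,j}\langle \eta_{ij}, \obs_{ij}-\probs_{ij}\rangle$ and exploit the one-hot structure of $\obs_{ij}$ from \cref{appendix:transformation}: because $\obs_{ij}$ is a standard basis vector of $\{0,1\}^\numK$, the scalar $\langle \eta_{ij}, \obs_{ij}\rangle$ equals a single coordinate of $\eta_{ij}$, so $|\langle \eta_{ij}, \obs_{ij}-\probs_{ij}\rangle| \leq 2\|\eta_{ij}\|_\infty \leq 2\|\eta_{ij}\|_2$. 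This absorbs the within-pair dependence between the coordinates $\obs_{ij}\indexk{\indexK}$ into one bounded scalar summand, while independence across distinct pairs $(i,j)$ still holds conditionally on $\{\xi_i\}$. Hoeffding's inequality then delivers, using $\sum_{i,j}\|\eta_{ij}\|_2^2 = \|\eta\|_F^2 = 1$,
\begin{equation*}
    \pr{\left|\langle \eta, \obs-\probs\rangle_F\right| > t \,\middle|\, \{\xi_i\}} \leq 2\exp(-c t^2),
\end{equation*}
for an absolute constant $c>0$.

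To upgrade this to the supremum over the unit sphere of $V_\map$, I would apply the standard $1/2$-net argument: the unit sphere of a $d$-dimensional real subspace admits a net of cardinality at most $6^d$, and the sphere-supremum is at most twice the net-maximum. Combining with the tail bound at $d=s\numK$ and a union bound over $\maps_{n,s}$ gives
\begin{equation*}
    \pr{\sup_{\map}\sup_{\eta}\left|\langle \eta, \obs-\probs\rangle_F\right| > t \,\middle|\, \{\xi_i\}} \leq 2\cdot 6^{s\numK}\cdot \max(k,s)^n\exp(-c t^2/4).
\end{equation*}
Choosing $t = C\sqrt{s\numK + n\log\max(k,s)}$ with $C$ sufficiently large (depending on $C'$) drives the right-hand side below $\exp(-C'n\log\max(k,s))$, and since the bound is free of $\{\xi_i\}$ it integrates to the unconditional statement.

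The main obstacle is exactly what the remark flags: a naive decomposition over the $n^2\numK$ scalar entries would involve dependent summands and break scalar Hoeffding. The fix is to first aggregate by pairs $(i,j)$, use the one-hot structure to control each pair's contribution by $\|\eta_{ij}\|_\infty$, and only then invoke pairwise independence. Everything downstream—the $6^{s\numK}$ covering factor and the $\max(k,s)^n$ shape count—is routine and contributes respectively $s\numK$ and $n\log\max(k,s)$ to the exponent, matching exactly the budget $s\numK + n\log\max(k,s)$ appearing inside the square root of the target bound.
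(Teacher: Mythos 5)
Your proposal is correct and follows essentially the same route as the paper's proof: a union bound over $\maps_{n,s}$ using $|\maps_{n,s}|\leq\max(k,s)^n$, a covering-number argument over the $s\numK$-dimensional set of shape-constant directions, and a pointwise concentration bound that exploits the one-hot structure of $\obs_{ij}$ to control the within-pair dependence via $\|\eta_{ij}\|_\infty\leq\|\eta_{ij}\|_2$ before summing $\|\eta_{ij}\|_2^2=\|\eta\|_F^2$. The only cosmetic difference is that the paper phrases the pointwise step as McDiarmid's bounded-difference inequality on the independent one-hot vectors (its Lemma on covering numbers) rather than your pair-aggregated scalar Hoeffding, which for a sum of independent bounded terms is the same bound.
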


\begin{proof}[Proof of\ \ Lemma~\ref{lemma:inner_product}]
	For each $\map \in \maps_{n, s}$, define the set $\mathcal{B}_{\map}$ by $\mathcal{B}_{\map}=\left\{\left\{a_{i j}\right\}: a_{i j l}=Q_{cl}\right.$ if $(i, j) \in \map^{-1}(c)$ for some $Q_{cl}$, and $\left.\sum_{i j l} a_{i j l}^2 \leq 1\right\}$. In other words, $\mathcal{B}_{\map}$ collects the element of $\mathcal{B}$ as defined in Lemma~\ref{lemma:covering_number_proba} determined by $w$. We then get
	$$
		\left|\left\langle\frac{\probshat-\probstilde}{\|\probshat-\probstilde\|}, \obs-\probs\right\rangle_F\right| \leq \max _{\map \in \maps_{n, s}} \sup _{a \in \mathcal{B}_{\map}} \left|\sum_{\indexK}\sum_{i j} a_{i j \indexK}\left(\obs_{i j}\indexk{\indexK}-\probs_{i j}\indexk{\indexK}\right)\right|.
	$$

	Using a union bound argument and Lemma~\ref{lemma:covering_number_proba}, we get
	\begin{align*}
		\mathrm{pr}\left(\max _{\map \in \maps_{n, s}} \sup _{a \in \mathcal{B}_{\map}} \left|\sum_{\indexK}\sum_{i j} a_{i j \indexK}\left(\obs_{i j}\indexk{\indexK}-\probs_{i j}\indexk{\indexK}\right)\right|>t\right)  \leq \sum_{\map \in \maps_{n,s}} \mathcal{N}\left(1/2,\mathcal{B}_{\map}, \|\cdot\|\right)\exp(-t^2/8).
	\end{align*}

	Since $\mathcal{B}_{\map}$ has $(\numK-1) s$ degree of freedom, a standard bound for covering numbers implies  $\mathcal{N}\left(1/2,\mathcal{B}_{\map}, \|\cdot\|\right) \leq \exp(C_1\numK s)$ \citep[Lemma 4.1]{pollard_empirical_1990}. Using $\left|\maps_{n,s}\right| < \exp(n \log(\max(k,s)))$, we get

	$$\mathrm{pr}\left(\max _{\map \in \maps_{n, s}} \sup _{a \in \mathcal{B}_{\map}} \left|\sum_{\indexK}\sum_{i j} a_{i j \indexK}\left(\obs_{i j}\indexk{\indexK}-\probs_{i j}\indexk{\indexK}\right)\right|>t\right)\leq \exp\left(-t^2/8 + C_1 \numK s + n \log(\max(k,s))\right).$$

	Picking $t^2 \propto \numK s +n \log(\max(k,s))$ finishes the proof.
\end{proof}

\begin{lemma}
	\label{lemma:inner_product_true}
	For any constant $C'>0$, there exists a constant $C>0$ only depending on $C'$, such that
	$$\left|\left\langle\frac{\probstilde-\probs}{\|\probstilde-\probs\|}, \obs-\probs\right\rangle_F\right| \leq C\sqrt{n\log(\max(k,s))},$$
	with probability at least $1-\exp\left(-C'n\log(\max(k,s))\right)$.
\end{lemma}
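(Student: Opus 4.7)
The strategy mirrors that of \Cref{lemma:inner_product} but is structurally simpler because no covering-number step is required. The key observation is that, conditional on $\{\xi_i\}_{i \in \n}$, the matrix $\probs$ is deterministic and $\probstilde$ depends on the data only through the estimated assignment $\hat{\map} \in \maps_{n,s}$. Hence, defining the (conditionally) deterministic unit tensor
$$u_\map := \frac{\bar{\probs}(\map) - \probs}{\|\bar{\probs}(\map) - \probs\|_F} \qquad \text{for each } \map \in \maps_{n,s}$$
(with the degenerate case $\bar{\probs}(\map) = \probs$ handled trivially since the inner product is then zero), the quantity of interest is dominated by a finite maximum,
$$\left|\left\langle\frac{\probstilde-\probs}{\|\probstilde-\probs\|_F}, \obs-\probs\right\rangle_F\right| \leq \max_{\map \in \maps_{n,s}} \bigl|\langle u_\map, \obs - \probs\rangle_F\bigr|.$$
The absence of a ball to be covered (contrast with $\mathcal{B}_\map$ in \Cref{lemma:inner_product}) is precisely why the $\numK s$ term does not appear in the stated bound.

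For a fixed $\map$, condition on $\{\xi_i\}$ and decompose $\langle u_\map, \obs - \probs\rangle_F = 2\sum_{i<j} Y_{ij}$, where
$$Y_{ij} := \sum_{\indexK \in [\numK]} u_{\map, ij\indexK}\bigl(\obs_{ij}\indexk{\indexK} - \probs_{ij}\indexk{\indexK}\bigr).$$
The crucial step, which addresses the Multinoulli dependence highlighted in the opening remark of this section, is to regard each $Y_{ij}$ as a single scalar random variable rather than a sum across decorations: since $\obs_{ij}$ is a one-hot vector, $\|\obs_{ij} - \probs_{ij}\|_2 \leq \sqrt{2}$, so Cauchy--Schwarz yields $|Y_{ij}| \leq \sqrt{2}\,\|u_{\map, ij\cdot}\|_2$. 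The family $\{Y_{ij}\}_{i<j}$ is conditionally independent and mean-zero, so Hoeffding's inequality gives
$$\mathrm{pr}\bigl(\bigl|\textstyle\sum_{i<j} Y_{ij}\bigr| > t \,\bigm|\, \{\xi_i\}\bigr) \leq 2\exp(-c t^2)$$
for a universal constant $c > 0$, using $\sum_{i<j}\|u_{\map, ij\cdot}\|_2^2 \leq \tfrac{1}{2}\|u_\map\|_F^2 \leq 1$.

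Union bounding over $\maps_{n,s}$ using $|\maps_{n,s}| \leq \max(k,s)^n = \exp(n\log\max(k,s))$, and then removing the conditioning on $\{\xi_i\}$ (the tail bound being uniform in $\xi$), we obtain
$$\mathrm{pr}\left(\max_{\map \in \maps_{n,s}} |\langle u_\map, \obs - \probs\rangle_F| > t\right) \leq 2\exp\bigl(n\log(\max(k,s)) - ct^2\bigr).$$
Choosing $t = C\sqrt{n\log(\max(k,s))}$ with $C$ large enough that $cC^2 \geq C' + 1$ completes the proof. The principal obstacle is the Multinoulli dependence among $\{\obs_{ij}\indexk{\indexK}\}_{\indexK}$, which rules out any naive coordinate-wise concentration; collapsing each edge's contribution into a single scalar $Y_{ij}$ and applying Hoeffding at the edge level sidesteps this cleanly at the cost of an innocuous multiplicative constant.
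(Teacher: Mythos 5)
Your proof is correct and follows essentially the same route as the paper's: reduce to a maximum over the finite set $\maps_{n,s}$ of inner products with fixed unit tensors, obtain edge-level concentration by treating each $\obs_{ij}$ as a single bounded unit (thereby sidestepping the within-edge Multinoulli dependence), and union bound using $|\maps_{n,s}|\leq \max(k,s)^n$. The only cosmetic difference is that you re-derive the single-tensor tail bound via Cauchy--Schwarz plus Hoeffding rather than invoking the McDiarmid-based bound of \cref{lemma:covering_number_proba}, which yields the same conclusion up to universal constants.
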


\begin{proof}[Proof of\ \ Lemma~\ref{lemma:inner_product_true}]
	Follows by using a union bound argument on $\maps_{n,s}$ and Lemma~\ref{lemma:covering_number_proba} (as in the proof of Lemma~\ref{lemma:inner_product}).
\end{proof}

\begin{lemma}
	\label{lemma:norm}
	For any constant $C'>0$, there exists a constant $C>0$ only depending on $C'$, such that
	$$\|\probshat-\probstilde\|_F \leq C\sqrt{\numK s + n \log\left(\max(k,s)\right)}$$

	with probability at least $1-\exp\left(-C'n\log\left(\max(k,s)\right)\right)$.
\end{lemma}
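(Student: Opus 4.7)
The plan is to reduce the bound on $\|\probshat - \probstilde\|_F$ to exactly the supremum controlled by \cref{lemma:inner_product}, so that the estimate follows from the same covering-number and union-bound machinery with no new concentration work required.

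The structural input is that $\probshat = \bar{\obs}(\hat{\map})$ and $\probstilde = \bar{\probs}(\hat{\map})$ are both constant on the shapes of the estimated partition $\hat{\map}$, so their difference lies in the same shape-constant subspace. For any matrix $A$ constant on shapes of $\hat{\map}$, the defining identity of the shape average gives $\langle A, \obs\rangle_F = \langle A, \probshat\rangle_F$ and $\langle A, \probs\rangle_F = \langle A, \probstilde\rangle_F$. Taking $A = \probshat - \probstilde$ and subtracting these two equalities yields the projection identity
\begin{equation*}
\|\probshat - \probstilde\|_F^2 = \langle \probshat - \probstilde,\, \obs - \probs\rangle_F,
\end{equation*}
so dividing by $\|\probshat - \probstilde\|_F$ gives
\begin{equation*}
\|\probshat - \probstilde\|_F = \left\langle \frac{\probshat - \probstilde}{\|\probshat - \probstilde\|_F},\, \obs - \probs \right\rangle_F.
\end{equation*}

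The unit-norm matrix $\frac{\probshat - \probstilde}{\|\probshat - \probstilde\|_F}$ is constant on shapes of $\hat{\map}$, hence belongs to the set $\mathcal{B}_{\hat{\map}}$ introduced in the proof of \cref{lemma:inner_product}. Bounding the right-hand side by passing from the (random) $\hat{\map}$ to a maximum over $\maps_{n,s}$,
\begin{equation*}
\|\probshat - \probstilde\|_F \leq \max_{\map \in \maps_{n,s}} \sup_{a \in \mathcal{B}_{\map}} \left| \sum_{\indexK} \sum_{ij} a_{ij\indexK}\left(\obs_{ij}\indexk{\indexK} - \probs_{ij}\indexk{\indexK}\right) \right|.
\end{equation*}
This is precisely the quantity handled in \cref{lemma:inner_product}: a union bound over $|\maps_{n,s}| \leq \max(k,s)^n$ partitions, together with the covering-number estimate $\log \mathcal{N}(1/2, \mathcal{B}_{\map}, \|\cdot\|) \lesssim \numK s$ and the concentration bound of \cref{lemma:covering_number_proba}, combine to yield the advertised $C\sqrt{\numK s + n\log(\max(k,s))}$ with probability at least $1 - \exp(-C' n\log(\max(k,s)))$.

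I do not expect a genuine obstacle. The only subtle point, namely the dependence between the coordinates of $\obs_{ij}$ flagged in the remark preceding the proofs, has already been absorbed inside \cref{lemma:covering_number_proba}. Once the projection identity above is recorded, the remainder is essentially a verbatim reuse of the argument of \cref{lemma:inner_product}, so the estimate transfers from the unit-vector supremum to $\|\probshat - \probstilde\|_F$ itself.
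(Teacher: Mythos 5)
Your proof is correct, but it takes a genuinely different route from the paper's. You establish the orthogonal-projection (Pythagorean) identity $\|\probshat-\probstilde\|_F^2=\langle\probshat-\probstilde,\obs-\probs\rangle_F$, which holds because $\probshat=\bar{\obs}(\hat{\map})$ and $\probstilde=\bar{\probs}(\hat{\map})$ are the projections of $\obs$ and $\probs$ onto the same subspace of arrays constant on the shapes of $\hat{\map}$; this makes \cref{lemma:norm} an immediate corollary of \cref{lemma:inner_product}, with no new concentration work. The paper instead argues directly on the per-shape statistics: it rewrites $\|\probshat-\probstilde\|_F^2$ as $\sum_c n_c\|\bar{\obs}_c(\hat\map)-\bar{\probs}_c(\hat\map)\|_2^2$, maximizes over $\map\in\maps_{n,s}$, bounds $\E{V_c(\map)}\leq\numK$ using the coordinatewise independence across distinct edges, shows each $V_c(\map)$ is sub-exponential via a Hoeffding bound on $\|\obs_{ij}-\probs_{ij}\|_1\leq 2$, and finishes with a Bernstein-type union bound. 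Your argument is shorter and exposes the redundancy between the two lemmas; the paper's is self-contained at the level of shape averages and does not lean on the exact least-squares characterization of $\probshat$. The only point to flag is that your identity requires $\probshat$ and $\probstilde$ to be \emph{exact} shape averages under the same $\hat{\map}$, and the paper's convention of setting the diagonal entries $\probstilde_{ii}$ to the point mass at $0_{\K}$ slightly perturbs this; that technicality is equally present in the paper's own use of $\mathcal{B}_{\hat\map}$ in \cref{lemma:inner_product} and is harmless, since the diagonal contributes at most $O(n)$ to the squared norm, which is absorbed by the stated bound.
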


\begin{proof}[Proof of\ \ Lemma~\ref{lemma:norm}]

	\begin{align*}
		\|\probshat-\probstilde\|_F^2
		 & = \sum_{l=1}^\numK\|\probshat\indexk{l}-\probstilde\indexk{l}\|_2^2                                                                                                            \\
		 & = \sum_{l=1}^\numK\sum_{i,j=1}^{n}\left(\probshat\indexk{l}_{ij}-\probstilde\indexk{l}_{ij}\right)^2                                                                           \\
		 & = \sum_{c \in [S]}\sum_{l=1}^\numK |\hat{\map}^{-1}(c)|\left(\bar{\obs}\indexk{l}_{c}(\hat{\map}) - \bar{\probs}\indexk{l}_{c}(\hat{\map})\right)^2        \\
		 & \leq \max_{\map \in \maps_{n,s}}\sum_{c \in [S]}\sum_{l=1}^l \left|\map^{-1}(c)\right|\left(\bar{\obs}\indexk{l}_{c}(\map) - \bar{\probs}\indexk{l}_{c}(\map)\right)^2
	\end{align*}

	For a given $\map \in \maps_{n,s}$, let $n_c = |\map^{-1}(c)|$ and define
	$$ V_{c}(\map)  =  \frac{1}{n_c}\sum_{\indexK}\left(\sum_{(i,j) \in \map^{-1}(c)}\left(\obs_{ij}\indexk{\indexK}-\probs_{ij}\indexk{\indexK}\right)\right)^2.$$

	We then have
	\begin{equation*}
		\|\probshat-\probstilde\|_F^2 \leq \max _{\map \in \maps_{n, s}} \sum_{c \in[s]} \E{V_c(\map)}+\max _{\map \in \maps_{n, s}} \sum_{c \in[s]}\left(V_c(\map)-\E{V_c(\map)}\right).
	\end{equation*}

	We bound the first term:
	\begin{align*}
		\E{V_c(\map)} & = \E{\frac{1}{n_c}\sum_{\indexK}\left(\sum_{(i,j) \in \map^{-1}(c)}\left(\obs_{ij}\indexk{\indexK}-\probs_{ij}\indexk{\indexK}\right)\right)^2} \\
		              & = \frac{1}{n_c}\sum_{\indexK}\E{\left(\sum_{(i,j)\in \map^{-1}(c)}\left(\obs_{ij}\indexk{\indexK}-\probs_{ij}\indexk{\indexK}\right)\right)^2}  \\
		              & = \frac{1}{n_c}\sum_{\indexK}\sum_{(i,j)\in \map^{-1}(c)}\mathrm{var}\left(\obs_{ij}\indexk{\indexK}\right)                                     \\
		              & \leq \numK,
	\end{align*}
	as for a fixed $\indexK$, the variables $\obs_{ij}\indexk{\indexK}$ and $\obs_{qr}\indexk{\indexK}$ are independent for all $(q,r) \neq (i,j)$ and $(q,r)\neq (j,i)$, and $\mathrm{var}\left(\obs_{ij}\indexk{\indexK}\right) \leq 1$.

	We now show that for a fixed $w$, $V_c(\map)$ is a sub-exponential random variable with constant sub-exponential parameter. This will allow us to finish the proof as in Verdeyme and Olhede \citep{verdeyme_hybrid_2024}, replacing the upper bound of $\E{V_c(\map)}$ by the one computed above.
	We first have \begin{equation*}
		V_{c}(\map) \leq \frac{1}{n_c}\sum_{\indexK}\left(\sum_{(i,j) \in \map^{-1}(c)}\left|\obs_{ij}\indexk{\indexK}-\probs_{ij}\indexk{\indexK}\right|\right)^2 \leq \frac{1}{n_c}\left(\sum_{\indexK}\sum_{(i,j) \in \map^{-1}(c)}\left|\obs_{ij}\indexk{\indexK}-\probs_{ij}\indexk{\indexK}\right|\right)^2.
	\end{equation*}
	We then have, for any $t>0$
	\begin{align*}
		\mathrm{pr}\left(V_{c}(\map)>t\right) & \leq \mathrm{pr}\left( \frac{1}{n_c}\left(\sum_{\indexK}\sum_{(i,j) \in \map^{-1}(c)}\left|\obs_{ij}\indexk{\indexK}-\probs_{ij}\indexk{\indexK}\right|\right)^2 > t\right) \\
		                                      & = \mathrm{pr}\left(\sum_{\indexK}\sum_{(i,j) \in \map^{-1}(c)}\left|\obs_{ij}\indexk{\indexK}-\probs_{ij}\indexk{\indexK}\right| > \sqrt{t n_c}\right)                      \\
		                                      & = \mathrm{pr}\left(\sum_{(i,j) \in \map^{-1}(c)}\left\|\obs_{ij}-\probs_{ij}\right\|_1 > \sqrt{t n_c}\right).
	\end{align*}

	Since $\left\|\obs_{ij}-\probs_{ij}\right\|_1 \leq \|\obs_{ij}\|_1 + \|\probs_{ij}\|_1 = 2$, we get that $\left\|\obs_{ij}-\probs_{ij}\right\|_1$ is a sub-Gaussian random variable. Using Hoeffding's inequality for sub-Gaussian variable \citep[Prop. 5.10]{vershynin_introduction_2011}, we have
	\begin{align*}
		\mathrm{pr}\left(V_{c}(\map)>t\right) \leq \exp\left(1- C\frac{t n_c}{4 n_c}\right) \leq \exp\left(1- Ct/4\right)
	\end{align*}
	for some universal constant $C>0$.

	The proof then concludes exactly as in Verdeyme and Olhede \citep[proof of Lemma A.4]{verdeyme_hybrid_2024} replacing $s$ by $s\numK$ in the upper bound of $\sum_c\E{V_c(\map)}$.
\end{proof}

\subsection{Auxiliary Result}

\begin{lemma}
	\label{lemma:covering_number_proba}
	Let $\mathcal{B} \subset\left\{a \in \mathbb{R}^{n \times n \times \numK}: \|a\|_F \leq 1\right\}$.
	Then for any $a \in \mathcal{B}$ we have

	$$\mathrm{pr}\left(\left|\sum_{\indexK}\sum_{i j} a_{i j \indexK}\left(\obs_{i j}\indexk{\indexK}-\probs_{i j}\indexk{\indexK}\right)\right|>t\right) \leq \exp \left(- t^2/8\right).$$

	If we additionally suppose that $\mathcal{B}$ is such that for any $a,b \in \mathcal{B}$
	\begin{equation}
		\label{eq:condition_covering}
		\frac{a-b}{\|a-b\|_F} \in \mathcal{B},
	\end{equation}
	we then have
	$$
		\mathrm{pr}\left(\sup _{a \in \mathcal{B}}\left|\sum_{\indexK}\sum_{i j} a_{i j \indexK}\left(\obs_{i j}\indexk{\indexK}-\probs_{i j}\indexk{\indexK}\right)\right|>t\right) \leq \mathcal{N}(1 / 2, \mathcal{B},\|\cdot\|) \exp \left(-C t^2/32\right).
	$$
\end{lemma}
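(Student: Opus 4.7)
My plan is to exploit the only independence that is available: across distinct vertex pairs. Conditional on $\{\xi_i\}$, the vectors $\{\obs_{ij}\}_{i<j}$ are independent $\gbern{\probs_{ij}}$ random variables, whereas, as flagged in the preceding remark, the coordinates $\obs_{ij}\indexk{\indexK}$ at a fixed pair are dependent. The natural move is therefore to collapse the $\indexK$-index first by setting
\begin{equation*}
Y_{ij} \;=\; \sum_{\indexK} a_{ij\indexK}\bigl(\obs_{ij}\indexk{\indexK}-\probs_{ij}\indexk{\indexK}\bigr),
\end{equation*}
so that the quantity to control becomes $\sum_{i<j} Y_{ij}$, a sum of centered independent random variables.

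For the first inequality, I would bound each $Y_{ij}$ almost surely. Because $\obs_{ij}$ is a single-entry vector, $Y_{ij}$ takes the value $a_{ijK}-\sum_{\indexK} a_{ij\indexK}\probs_{ij}\indexk{\indexK}$ with probability $\probs_{ij}\indexk{K}$, so $|Y_{ij}|\leq 2\max_\indexK |a_{ij\indexK}|$. Applying Hoeffding's inequality for bounded independent centered variables gives
\begin{equation*}
\pr{\left|\sum_{i<j} Y_{ij}\right|>t/2} \;\leq\; \exp\!\left(-\frac{(t/2)^2}{2\sum_{i<j}(2\max_\indexK|a_{ij\indexK}|)^{2}}\right),
\end{equation*}
and since $\sum_{ij}\max_{\indexK}a_{ij\indexK}^{2}\leq \sum_{ij}\sum_{\indexK}a_{ij\indexK}^{2}=\|a\|_{F}^{2}\leq 1$, the right-hand side is at most $\exp(-t^{2}/8)$ (absorbing the factor arising from symmetrising $i<j$ into the $1/8$ constant). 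This is the desired tail bound, and the crucial point is that the $\numK$-dependence has been completely absorbed into the Frobenius bound on $a$, which is what makes the second inequality free of any polynomial $\numK$ factor in the exponent.

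For the supremum version, I would use the standard net-and-peel argument. Let $\mathcal{N}=\mathcal{N}(1/2,\mathcal{B},\|\cdot\|)$ and fix a $1/2$-net. Writing $T(a):=\sum_{\indexK,ij}a_{ij\indexK}(\obs_{ij}\indexk{\indexK}-\probs_{ij}\indexk{\indexK})$, for every $a\in\mathcal{B}$ pick $a^{\star}$ in the net with $\|a-a^{\star}\|\leq 1/2$. Linearity gives
\begin{equation*}
|T(a)| \;\leq\; |T(a^{\star})| + \|a-a^{\star}\|\cdot\bigl|T\bigl((a-a^{\star})/\|a-a^{\star}\|\bigr)\bigr| \;\leq\; |T(a^{\star})| + \tfrac{1}{2}\sup_{b\in\mathcal{B}}|T(b)|,
\end{equation*}
where the hypothesis \cref{eq:condition_covering} is used so that the normalised difference lies in $\mathcal{B}$. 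Taking suprema yields $\sup_{a}|T(a)|\leq 2\max_{a^{\star}}|T(a^{\star})|$, and a union bound over the net combined with the first part with threshold $t/2$ gives the stated $\mathcal{N}(1/2,\mathcal{B},\|\cdot\|)\exp(-Ct^{2}/32)$.

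The main obstacle is conceptual rather than computational: it is the non-independence across $\indexK$ highlighted in the remark, which rules out a coordinate-wise Bernoulli concentration. Once one aggregates into $Y_{ij}$ and checks that the almost-sure bound $|Y_{ij}|\leq 2\max_\indexK|a_{ij\indexK}|$ aggregates via $\sum_{ij}\max_\indexK a_{ij\indexK}^{2}\leq\|a\|_{F}^{2}$, the remainder is a routine Hoeffding plus covering exercise mirroring the simple-graph case.
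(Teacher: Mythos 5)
Your proof is correct and follows essentially the same route as the paper: you collapse the within-pair dependence across $\indexK$ into a single bounded variable per vertex pair, bound it by $2\max_{\indexK}|a_{ij\indexK}|$, and control $\sum_{ij}\max_{\indexK}a_{ij\indexK}^2$ by $\|a\|_F^2\leq 1$, exactly the mechanism the paper uses via McDiarmid's bounded-difference inequality (which, for a linear statistic of independent $\obs_{ij}$, coincides with your direct Hoeffding application). The net-and-peeling argument for the supremum is identical to the paper's.
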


\begin{proof}[Proof of\ \ Lemma~\ref{lemma:covering_number_proba}]

	Let us first notice that
	$$\left|\sum_{\indexK}\sum_{i j} a_{i j \indexK}\left(\obs_{i j}\indexk{\indexK}-\probs_{i j}\indexk{\indexK}\right)\right| = |\langle a, \obs-\probs\rangle|.$$

	We will show that for each $b \in \mathcal{B}$, $\langle b, \obs-\probs \rangle$ is a function of $\{\obs_{ij}\}_{i > j}$ that satisfies the bounded difference properties.
	\begin{align*}
		\langle b, \obs-\probs \rangle = \sum_{\indexK}\sum_{i j} b_{i j \indexK}\left(\obs_{i j}\indexk{\indexK}-\probs_{i j}\indexk{\indexK}\right) = \sum_{i j} \langle b_{ij}, \obs_{ij} \rangle - 2\sum_{i j} \langle b_{ij}, \probs_{ij} \rangle,
	\end{align*}

	where $b_{ij} = (b_{ij1},\ldots,b_{ij\numK})$. Let us examine what happens when we change the value of this function's $(k,l)$th coordinate. The vector $\obs_{kl}$ is binary with exactly one non-zero component $r \in [\numK]$. Its contribution to the function value is $b_{ijr} + b_{jir}$. Changing the value of $\obs_{kl}$ will then change this contribution to $b_{ijq} + b_{jiq}$ for a $q\in [\numK]$. The absolute value of the difference is then
	$$|b_{ijr} + b_{jir} -b_{ijq} - {b_{jiq}}|\leq 4 \max_{\indexK}|b_{ij\indexK}|.$$

	Notice that $\E{\langle b, \obs-\probs \rangle}=0$. Using the bounded difference property we have just shown, we use McDiarmid's inequality to get
	\begin{equation}
		\label{eq:fixed_b_inner_product_supp}
		\mathrm{pr}\left(\left|\langle b, \obs-\probs \rangle\right| \geq \epsilon \right) \leq \exp\left(-\frac{2\epsilon^2}{16\sum_{ij}\max_{\indexK}(b_{ij\indexK})^2}\right) \leq  \exp\left(-\frac{\epsilon^2}{8}\right),
	\end{equation}

	since $\sum_{ij}\max_{\indexK}(b_{ij\indexK})^2 \leq \sum_{ijl}(b_{ij\indexK})^2 = \|b\|_F^2 \leq 1$. This shows the first part of the proposition.

	Let $\mathcal{B}'$ be a $1/2$-net of $\mathcal{B}$ such that $|\mathcal{B}'| \leq \mathcal{N}(1/2, \mathcal{B},\|\cdot\|)$. For any $a \in \mathcal{B}$ there is a $b \in \mathcal{B}'$ such that $\|a-b\| \leq 1/2$.

	Thus,
	$$
		\begin{aligned}
			|\langle a, \obs-\probs\rangle| & \leq|\langle a-b, \obs-\probs\rangle|+|\langle b, \obs-\probs\rangle|                                             \\
			                                & \leq\|a-b\|\left|\left\langle\frac{a-b}{\|a-b\|}, \obs-\probs\right\rangle\right|+|\langle b, \obs-\probs\rangle| \\
			                                & \leq \frac{1}{2} \sup _{a \in \mathcal{B}}|\langle a, \obs-\probs\rangle|+|\langle b, \obs-\probs\rangle|,
		\end{aligned}
	$$
	where the last inequality follows from the assumption eq.~\eqref{eq:condition_covering}.
	Taking the supremum with respect to $\mathcal{B}$ and maximum with respect to $\mathcal{B}^{\prime}$ on both sides, we have
	$$
		\sup _{a \in \mathcal{B}}\left|\langle a,\obs-\probs\rangle\right| \leq 2 \max _{b \in \mathcal{B}^{\prime}}\left|\langle b, \obs-\probs \rangle\right|.
	$$

	We then get
	\begin{align*}
		\mathrm{pr}\left(\sup _{a \subset \mathcal{B}}\left|\langle a,\obs-\probs\rangle\right|\geq t\right) & \leq \mathrm{pr}\left(2 \max _{b \in \mathcal{B}^{\prime}}\left|\langle b, \obs-\probs \rangle\right| \geq t\right) \\
		                                                                                                     & \leq \sum_{b \in \mathcal{B}'}\mathrm{pr}\left(\left|\langle b, \obs-\probs \rangle\right| \geq \frac{t}{2}\right).
	\end{align*}

	Combining what we have with eq.~\eqref{eq:fixed_b_inner_product_supp}, we get
	\begin{align*}
		\mathrm{pr}\left(\sup _{a \in \mathcal{B}}|\langle a, \obs-\probs\rangle|>t\right) & \leq \sum_{b \in \mathcal{B}'}\exp\left(-t^2/32\right)                \\
		                                                                                   & = |\mathcal{B}'|\exp\left(-t^2/32\right)                              \\
		                                                                                   & \leq \mathcal{N}(1/2, \mathcal{B},\|\cdot\|)\exp\left(-t^2/32\right),
	\end{align*}
	which concludes the proof.
\end{proof}

\section{Proofs of the Main Theorems: Compactly Decorated Case}
\label{sec:compactly_decorated}

The strategy is to discretize the continuous decoration space into $\numK$ bins, reduce to the finitely decorated setting of Appendix~\ref{sec:finitely_decorated}, and then control the discretization error.

We approximate $w$ by $\tilde{w}$, defined on $\numK$ equal-sized bins $[0,1]=\cup[x_\indexK,x_{\indexK+1}]$ with $x_\indexK = \indexK/\numK$ for $\indexK=0,\ldots,\numK-1$. Including the atom at $0$, $\tilde{w}$ is parametrized by $\numK+1$ values: the atom weight and the $\numK$ bin heights. Let $\histmeasures{[0,1]}$ denote the set of all such distributions (see Figure~\ref{fig:discretization}).

\begin{figure}[h!]
	\centering
	\includegraphics[width=14cm]{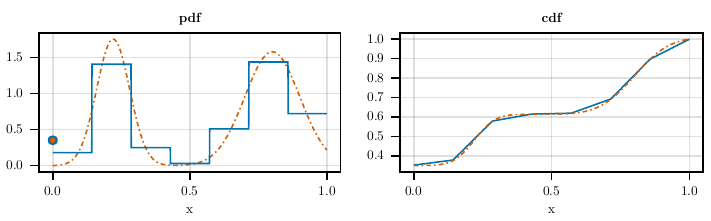}
	\caption{Approximation of a mass/distribution function by a binned distribution with $7$ bins (eq.~\eqref{eq:def_w_tilde}).}
    \label{fig:discretization}
\end{figure}

We define $\tilde{w} \in \histmeasures[L]{[0,1]}$ as the discrete approximation of $w$ by the following distribution function
\begin{equation}
	\label{eq:def_w_tilde}
    \tilde{w}(x) = \indicator{x \geq 0}(1-p)+ p\sum_{\indexK=0}^{\numK-1}\indicator{x \in (x_\indexK,x_{\indexK+1}]}\left[\cdf{x_\indexK} + \numK\left(x-x_\indexK\right)\left( \cdf{x_{\indexK+1}}- \cdf{x_{\indexK}}\right)\right].
\end{equation}

\subsection{Wasserstein $1$ distance}
\label{subsection:proof_thm_rate_compact}
In this section, we will represent a probability measure by its distribution function. To compare $w_1,w_2 \in \measures{[0,1]}$, two probability distributions on $[0,1]$, we will use the $1$-Wasserstein distance, which simplifies to \citep{dallaglio_sugli_1956}

$$\Wd[1]{w_1, w_2}=\int_{[0,1]}\left|w_1(x)-w_2(x)\right| \mathrm{d}x.$$

\begin{lemma} Under Assumption~\ref{assumption:DistributionFunction}, let $\tilde{w}_{ij} \in \histmeasures[L]{[0,1]}$ (and related $\tilde{\theta}_{ij}$) be the discretization described in eq.~\eqref{eq:theta_hist_from_cont}. For all $\hat{w}_{ij} \in \histmeasures[L]{[0,1]}$ (and related $\hat{\theta}_{ij}$),
	\begin{equation*}
		\frac{1}{n^2}\sum_{i,j}\mathcal{W}_1\left(w_{ij}, \hat{w}_{ij}\right)  \leq 2\numK^{-1} +  \frac{2}{n^2}\|\tilde{\theta}-\hat{\theta}\|_1
	\end{equation*}
	\label{lemma:w1_dist}
\end{lemma}

\begin{proof}[Proof of\ \ Lemma~\ref{lemma:w1_dist}]

	We first look at the discretization error, or going from $w$ to $\tilde{w}$:
	\begin{align*}
		\mathcal{W}_1\left(w, \tilde{w}\right) & = \int_{[0,1]}\left| \delta_x([0,\infty))(1-p) + pF(x)-  \delta_x([0,\infty))(1-p) - p\tilde{F}(x)\right| \mathrm{d}x                         \\
		                                       & =\int_{[0,1]}p\left|F(x)-\tilde{F}(x)\right| \mathrm{d}x                                                                                      \\
		                                       & \leq \sum_{\indexK=0}^{\numK-1}\int_{x_{l}}^{x_{l+1}}\abs{\cdf{x_{l}} + \numK(x-x_{l})\left(\cdf{x_{l+1}}-\cdf{x_{l}}\right)-\cdf{x}}dx       \\
		                                       & \leq \sum_{\indexK=1}^{\numK}\int_{x_{l}}^{x_{l+1}}\abs{\numK(x-x_{l})\left(\cdf{x_{l+1}}-\cdf{x_{l}}\right)} + \abs{\cdf{x_{l}}  -\cdf{x}}dx \\
		                                       & \leq \sum_{\indexK=1}^{\numK}\int_{x_{l}}^{x_{l+1}}\abs{\cdf{x_{l}}-\cdf{x_{l+1}}} + \abs{\cdf{x_{l}}  -\cdf{x_{l+1}}}dx                      \\
		                                       & \leq \sum_{\indexK=1}^{\numK} 2(\cdf{x_{l+1}}-\cdf{x_{l}})/\numK                                                                              \\
		                                       & \leq 2/\numK.
	\end{align*}
    Now we are interested in the difference between two histogram distributions on the same bins $w_1,w_2 \in \histmeasures{[0,1]}$. We have, for $x \in [x_\indexK,x_{\indexK+1}]$,
	\begin{align*}
		\abs{w_1(x)-w_2(x)} & = \abs{\theta_1\indexk{0}-\theta_2\indexk{0} + \sum_{k=1}^{\indexK}\left(\theta_1\indexk{k}-\theta_2\indexk{k}\right)+ \numK\left(x-x_\indexK\right)\left(\theta_1\indexk{\indexK+1}-\theta_2\indexk{\indexK+1}\right)},
	\end{align*}
	thus, since $\int_{x_\indexK}^{x_{\indexK+1}}\left(x-x_\indexK\right)\mathrm{d}x=L^{-2}$,
	\begin{align*}
		\mathcal{W}_1\left(w_1, w_2\right) & \leq  \sum_{\indexK=0}^{\numK-1}\int_{x_\indexK}^{x_{\indexK+1}}\left(\sum_{k=0}^{\indexK}\abs{\theta_1\indexk{k}-\theta_2\indexk{k}}+ \numK\left(x-x_\indexK\right)\abs{\left(\theta_1\indexk{\indexK+1}-\theta_2\indexk{\indexK+1}\right)}\right)\mathrm{d}x                    \\
		                                   & \leq \sum_{\indexK=0}^{\numK-1}\sum_{k=0}^{\indexK}\abs{\theta_1\indexk{k}-\theta_2\indexk{k}}/\numK + \numK\sum_{\indexK=0}^{\numK-1}\abs{\left(\theta_1\indexk{\indexK+1}-\theta_2\indexk{\indexK+1}\right)}\int_{x_\indexK}^{x_{\indexK+1}}\left(x-x_\indexK\right)\mathrm{d}x \\
		                                   & =  \sum_{\indexK=0}^{\numK-1}\sum_{k=0}^{\indexK}\abs{\theta_1\indexk{k}-\theta_2\indexk{k}}/\numK + \sum_{\indexK=0}^{\numK-1}\abs{\left(\theta_1\indexk{\indexK+1}-\theta_2\indexk{\indexK+1}\right)}/\numK                                                                     \\
		                                   & = \sum_{\indexK=0}^{\numK-1}\sum_{k=0}^{\indexK+1}\abs{\theta_1\indexk{k}-\theta_2\indexk{k}}/\numK                                                                                                                                                                               \\
		                                   & = \sum_{\indexK=0}^{\numK}\frac{\numK+1-\indexK}{\numK}\abs{\theta_1\indexk{\indexK}-\theta_2\indexk{\indexK}}                                                                                                                                                                    \\
		                                   & \leq 2 \|\theta_1-\theta_2 \|_1.
	\end{align*}

	Combining the bounds above, for fixed $n$ and $L$, we get
	\begin{align*}
		\frac{1}{n^2}\sum_{i,j}\mathcal{W}_1\left(w_{ij}, \hat{w}_{ij}\right) & = \frac{1}{n^2}\sum_{i,j}\mathcal{W}_1\left(w_{ij}, \tilde{w}_{ij}\right) + \frac{1}{n^2}\sum_{i,j}\mathcal{W}_1\left(\tilde{w}_{ij}, \hat{w}_{ij}\right) \\
		                                                                      & \leq 2\numK^{-1} +  \frac{2}{n^2}\sum_{i,j}\|\tilde{\theta}_{ij}-\hat{\theta}_{ij}\|_1                                                                    \\
		                                                                      & = 2\numK^{-1} +  \frac{2}{n^2}\|\tilde{\theta}-\hat{\theta}\|_1.
	\end{align*}
\end{proof}

\begin{proof}[Proof of\ \ Theorem~\ref{theorem:rate-compact}]
    Combining Lemma~\ref{lemma:w1_dist} and Theorem~\ref{theorem:holder-rate}, we get that for all $C'>0$, there exists a constant $C>0$ only depending on  $C',M,\alpha$, such that with probability at least $1-\exp\left(-C'n\log(\max(k,s))\right)$, we have
    \begin{align*}
        \frac{1}{n^2}\sum_{i,j}\mathcal{W}_1\left(w_{ij}, \hat{w}_{ij}\right) & \leq 2\numK^{-1} +  \frac{2}{n^2}\|\probshat - \probstilde\|_1 \\
        & \leq  2L^{-1} + 2L^{1/2}\left(\frac{1}{n^2}\|\probshat - \probstilde\|_2^2\right)^{1/2}\\
        & \leq  C\left[L^{-1} + L^{1/2}\left(k^{-2(\alpha \wedge 1)} + Lk^2n^{-2} + \log(n)n^{-1}  \right)^{1/2}\right] \\
        & \leq  C\left[L^{-1} +  L^{1/2}\left(k^{-(\alpha \wedge 1)} + L^{1/2}kn^{-1} + \sqrt{\log(n)}n^{-1/2}  \right)\right].
    \end{align*}
Now let $\gamma,\delta \in (0,1)$,  $k=n^{\delta}$, and $L=n^{\gamma}$. We then have
\begin{align*}
    \frac{1}{n^2}\sum_{i,j}\mathcal{W}_1\left(w_{ij}, \hat{w}_{ij}\right) & \leq C\left[n^{-\gamma} +  n^{\gamma/2}\left(n^{-\delta(\alpha \wedge 1)} + n^{\gamma/2+\delta-1} + \sqrt{\log(n)}\,n^{-1/2}  \right)\right]\\
    & \leq C\left(n^{-\gamma} + n^{-\delta(\alpha \wedge 1)+\gamma/2} + n^{\gamma+\delta-1} + \sqrt{n^{\gamma-1}\log(n)}\right).
\end{align*}
Balancing the first two terms gives $\gamma = 2\delta(\alpha\wedge 1)/3$; the third term is then dominated as soon as $\delta+2\gamma\leq 1$, while the clustering term $\sqrt{n^{\gamma-1}\log(n)}$ satisfies $\sqrt{n^{\gamma-1}\log(n)} = o(n^{-\gamma})$ whenever $\gamma<1/3$ strictly, and is therefore absorbed in the $n^{-\gamma}$ rate by a constant. Letting $\delta = \frac{3}{4\alpha+3}$ and $\gamma = \frac{2\alpha}{4\alpha+3}$, we have $\gamma\leq 2/7 < 1/3$ for $\alpha\in(0,1]$ and all side conditions are satisfied, so for all $C'>0$, there exists a constant $C>0$ only depending on $C',M,$ and $\alpha$, such that
\begin{equation*}
    \frac{1}{n^2}\sum_{i,j}\mathcal{W}_1\left(w_{ij}, \hat{w}_{ij}\right)  \leq Cn^{-2\alpha/(4\alpha+3)},
\end{equation*}
with probability at least $1-\exp\left(-C'n\right)$.
\end{proof}

\subsection{$L_2$ norm between densities}

We now suppose that the probability measures we consider have smooth densities, with smoothness independent of the latent variables. For distribution functions $F,F'$ with densities $f,f'$, we consider  $\operatorname{d}(F,F') = \|f-f'\|_2^2$, where $\|f\|_2$ is the $L^2$ norm of $f$ with respect to the Lebesgue measure.

\begin{assumption}
	\label{assumption:density_function}
	$W:[0,1]^2 \rightarrow \measures{[0,1]}$ is such that, for all $\xi_1,\xi_2\in [0,1]$, the distribution function $F$ associated to $W(\xi_1,\xi_2)$ in Assumption~\ref{assumption:DistributionFunction} is absolutely continuous, with density function $f$ such that $|f(a)-f(b)|\leq B |a-b|$ for some $B$  independent of $\xi_1,\xi_2$.
	Thus $w$ has density $$\mu(x) = (1-p)\delta_0(x) + pf(x).$$
\end{assumption}

Assumption~\ref{assumption:density_function} imposes smoothness assumptions on the image of the decorated graphon, and additionally requires the image to be "globally" smooth since all the distributions $W(u,v)$ will have a continuous density that respects the Lipschitz condition with $B$. This prevents the edge distributions from varying wildly from each other.

\begin{lemma}
	\label{lemma:approx_l2_density}
	Let $w$ be as in Assumption~\ref{assumption:density_function}, with density $\mu$. Let $\tilde{w},\hat{w} \in \histmeasures[L]{[0,1]}$ with $\tilde{w}$ the discretization of $w$ as defined in eq.~\eqref{eq:theta_hist_from_cont} and $\tilde{\mu},\hat{\mu}$ their densities. Then for some constant $C>0$, we have
	\begin{equation*}
		\|\mu-\hat{\mu}\|_2^2 \leq C\left(L^{-2} + L\|\tilde{\theta}-\hat{\theta}\|_2^2\right),
	\end{equation*}
	where $C=\max\{3B^2,1\}$ with $B$ from Assumption~\ref{assumption:density_function}.
\end{lemma}

\begin{proof}[Proof of\ \ Lemma~\ref{lemma:approx_l2_density}]
	We measure the distance between $w$ and $\hat{w}$ by the $L^2$ norm between their densities $\mu,\hat{\mu}$

	$$d(w,\hat{w}) = \|\mu-\hat{\mu}\|_2^2 = \int_0^1 (\mu(x)-\hat{\mu}(x))^2dx \leq 3\left(\|\mu-\tilde{\mu}\|_2^2 + \|\hat{\mu}-\tilde{\mu}\|_2^2\right).$$

	We start with the discretization error $\|\mu-\tilde{\mu}\|_2^2$: the atom at $0$ is defined to be the same in $w$ and $\tilde{w}$ (see eq.~\eqref{eq:def_w_tilde}), hence we only have to focus on the continuous term
	\begin{align*}
		\|\mu-\tilde{\mu}\|_2^2 & = \int_{0}^{1}\left(\mu(x)-\tilde{\mu}(x)\right)^2dx                                                  \\
		                        & \leq \int_{0}^{1}\left(f(x)- \tilde{f}(x)\right)^2dx                                                  \\
		                        & \leq \sum_{l=1}^L\int_{x_{l-1}}^{x_l}\left(f(x)- L\int_{x_{l-1}}^{x_l}{f}(u)du\right)^2dx             \\
		                        & \leq \sum_{l=1}^L\int_{x_{l-1}}^{x_l}\left(L\int_{x_{l-1}}^{x_l}\left|f(x)-{f}(u)\right|du\right)^2dx \\
		                        & \leq \sum_{l=1}^L\int_{x_{l-1}}^{x_l}\left(L\int_{x_{l-1}}^{x_l}B\left|x-u\right|du\right)^2dx        \\
		                        & \leq B^2L^{-2}
	\end{align*}

	We now relate the norm of $\tilde{w},\hat{w} \in \histmeasures{[0,1]}$ to the distance between their parameters $\tilde{\theta},\hat{\theta}$ by noticing that these functions are piecewise constant and using the definition of the integral of the delta function
	\begin{align*}
		\|\tilde{\mu}-\hat{\mu}\|_2^2 = \int_{0}^{1}\left(\tilde{\mu}(x)-\hat{\mu}(x)\right)^2dx
		=\left(\tilde{\theta}\indexk{0}-\hat{\theta}\indexk{0}\right)^2 + \sum_{l=1}^{L}L\left(\tilde{\theta}\indexk{l}-\hat{\theta}\indexk{l}\right)^2
		\leq L\|\tilde{\theta}-\hat{\theta}\|_2^2.
	\end{align*}
\end{proof}

\begin{remark}
	Modifying the Assumption~\ref{assumption:density_function} by requiring the densities to be Hölder continuous with exponent at least $0<\eta\leq 1$ (i.e., $|f(a)-f(b)|\leq B |a-b|^\eta$) changes Lemma~\ref{lemma:approx_l2_density} in a straightforward way:
	\begin{equation*}
		\|\mu-\hat{\mu}\|_2^2 \leq C\left(L^{-2\eta} + L\|\tilde{\theta}-\hat{\theta}\|_2^2\right).
	\end{equation*}
\end{remark}

Combining the H\"older-rate bounds from Appendix~\ref{subsection:holder-rate} with the discretization error yields the following.

\begin{theorem}
	\label{theorem:convergence_l2_densities}
	For $W \in \mathcal{H}(\alpha, M)$ respecting Assumption~\ref{assumption:density_function}, for any $C^{\prime}>0$, there exists a constant $C>0$ only depending on $C^{\prime}, M,B, \alpha$, where the following holds
	$$
		\frac{1}{n^2} \sum_{i j}\|\mu_{ij}-\hat{\mu}_{i j}\|_2^2 \leq Cn^{-4\alpha/(4\alpha+3)},
	$$
	with probability at least $1-\exp \left(-C^{\prime} n\right)$, with $k = \left\lceil n^{3/(4\alpha+3)} \right\rceil$, $s=\binom{k}{2}$ and $L=\left\lceil n^{2\alpha/(4\alpha+3)}\right\rceil$.
\end{theorem}

\begin{proof}[Proof of\ \ Theorem~\ref{theorem:convergence_l2_densities}]
	From Lemma~\ref{lemma:approx_l2_density}, we have
	\begin{equation*}
		\frac{1}{n^2} \sum_{i j}\|\mu_{ij}-\hat{\mu}_{i j}\|_2^2 \leq 3B^2L^{-2} + \frac{L}{n^2}\sum_{ij}\|\tilde{\theta}_{ij}-\hat{\theta}_{ij}\|_2^2.
	\end{equation*}
	The second term on the right-hand side can then be analyzed using the results from Section~\ref{section:inference_finitely}. We reuse the proof of Theorem~\ref{theorem:holder-rate}, and we obtain that there exists a constant $C_2>0$
	\begin{equation*}
		\frac{1}{n^2} \sum_{i j}\|\mu_{ij}-\hat{\mu}_{i j}\|_2^2 \leq 3B^2L^{-2} + C_2\left(L\left(\frac{k}{d}\right)^{-2(\alpha \wedge 1)} + L^2s^*n^{-2} + L\log(\max(k,s^*))n^{-1}\right),
	\end{equation*}
	with probability at least $1-\exp(-C'n)$. We now first focus on the block model approximation: we let the number of shapes $s$ and the number of bins $L$ depend on $n$: $k=n^\delta, s\propto n^{2\delta}$ and $L=n^\gamma$ for $\delta,\gamma \in (0,1]$ and $\beta=1$. We obtain that for a $C_3>0$,
	\begin{equation*}
		\frac{1}{n^2} \sum_{i j}\|\mu_{ij}-\hat{\mu}_{i j}\|_2^2 \leq C_3\left(n^{-2\gamma} + n^{-2\delta(\alpha \wedge 1)+\gamma} + n^{-2+2\gamma+2\delta} + n^{-1+\gamma} \log(n)\right).
	\end{equation*}
	To find the best rate, we equate the first and second terms on the right-hand side. This leads to $\gamma = 2\delta(\alpha\wedge 1)/3$, and setting $\delta=3/(4(\alpha\wedge 1)+3)$ balances the first three terms. Since the resulting $\gamma = 2(\alpha\wedge 1)/(4(\alpha\wedge 1)+3) \leq 2/7$ satisfies $\gamma<1/3$ strictly, the clustering term $n^{-1+\gamma}\log(n) = o(n^{-2\gamma})$, and is therefore absorbed in the $n^{-2\gamma}$ rate by a constant. For $\alpha\in(0,1]$ this yields the required rate.

	For the bound based on stochastic shape models, we consider elements of $\boundedssm$, and we let the number of blocks $k$, their diameter $d$ and the number of bins $L$ depend on $n$: $k=n^{\delta}, d=n^{\beta}$ and $L=n^\gamma$ for $\delta,\gamma \in (0,1]$. We obtain that for a $C_3>0$,
	\begin{equation*}
		\frac{1}{n^2} \sum_{i j}\|\mu_{ij}-\hat{\mu}_{i j}\|_2^2 \leq C_3\left(n^{-2\gamma} + n^{-2(\delta-\beta)(\alpha \wedge 1)+\gamma} + n^{-2+2\gamma+2(\delta-\beta)} + n^{-1+\gamma} \log(n)\right).
	\end{equation*}
	Thus for any $\delta,\beta$ such that $\delta-\beta = 3/(4\alpha+3)$, we obtain the same rates as with the block model approximation. The number of shapes will be of the order of $n^{6/(4\alpha+3)}$ with diameter proportional to $(n^\beta+2)/n^\delta \approx n^{\beta-\delta}$.
\end{proof}

Using a direct adaptation of Klopp et al. \citep[Proposition 3.5]{klopp_oracle_2017}, we obtain that the agnostic error is $n^{-\alpha \wedge 1}$, and thus

\begin{lemma}
	Under the same setting as in Theorem~\ref{theorem:convergence_l2_densities}, we get that
	$$
		\operatorname{MISE}\left(\hat{W}_{\probshat}, W\right) \leq C\left(n^{-4 \alpha /(4\alpha+3)} + n^{-\alpha\wedge 1}\right).
	$$
	The first term dominates for $\alpha\geq 1/4$, recovering the rate of Theorem~\ref{theorem:convergence_l2_densities}; for $\alpha\in(0,1/4)$ the agnostic term $n^{-\alpha}$ is the slower of the two.
\end{lemma}

\section{Optimization: Practical Considerations}
\label{sec:optimization}

\begin{remark}[Computational efficiency]
    \label{remark:computation}
	The choice of indicator functions in eq.~\eqref{eq:estimator_w_reconstruction} allows us to approximate the solution of the least squares problem ``efficiently'' via greedy label-switching algorithms \citep{zhao_consistency_2012,glover_tabu_1997} as done by Olhede and Wolfe \citep{olhede_network_2014}. Starting from a good partition of the nodes, testing a local update (switching the labels of two nodes) can be done in $O(Lkn)$ time in the worst case. Other methods such as, Kernel density estimation (KDE) could be considered; however exact computation of the KDE is expensive, and fast approximations use binning strategies. Additionally, testing for a local update with KDE would require refitting the estimated density for each pair of blocks and then evaluating the log-likelihood, leading to a complexity of at best $O(k^2n\log(n))$ using a fast Fourier transform.
\end{remark}

The results presented in Section~\ref{subsection:finitely-properties} are valid for a global optimum of eq.~\eqref{eq:least-squares-formulation}. This problem boils down to finding an optimal partition of the vertices, which is, in principle, NP-hard. We approximate this global optimum using a greedy label-switching algorithm \citep{bickel_nonparametric_2009,olhede_network_2014}. The starting point of this label-switching algorithm can influence whether we end up in a local optimum, but picking a good starting point instead of a random one helps to reduce the optimality gap in practice. Standard starting points are often computed using spectral clustering \citep{olhede_network_2014,arroyo_inference_2021}, but we found that experimentally using a multilevel $k$-way partitioning \citep{karypis_multilevel_1998} as implemented in the library METIS \citep{karypis_metis_1997} is fast and provides very good starting points.

\begin{remark}
	Barbillon et al. \citep{barbillon_stochastic_2017} use variational inference \citep{celisse_consistency_2012} to fit the multiplex stochastic block model, while our code \citep{dufour_networkhistogramjl_2023} uses least squares. Gaucher and Klopp \citep{gaucher_maximum_2021,gaucher_optimality_2021} discuss the relationship between the two estimators and show that both are minimax optimal in the context of graphon estimation.
\end{remark}

To pick the resolution $k$ when $\alpha$ is unknown, we use the automatic bandwidth estimation of Olhede and Wolfe \citep{olhede_network_2014} on each decoration probability $w\indexk{l}$, yielding $\hat{k}_l$, and pick the biggest group number $\hat{k} = \max_l \hat{k}_l$. This is equivalent to taking the least regular $w\indexk{l}$ to decide the number of groups needed. The number of shapes $s$ is picked using the Bayesian Information Criterion (BIC) as in Verdeyme and Olhede \citep{verdeyme_hybrid_2024}.

\section*{Supplementary Material}
\textbf{Code and data.} Code to reproduce the experiments of the paper. The methods are implemented in the Julia package \texttt{NetworkHistogram.jl} \citep{dufour_networkhistogramjl_2023}.

\end{document}